\documentclass[twoside,10pt]{article}

\usepackage{microtype}
\usepackage{graphicx}
\usepackage{subcaption}
\usepackage{booktabs} % for professional tables

\usepackage{hyperref}
\newcommand{\newcheckmark}{\raisebox{0.6ex}{\scalebox{0.7}{$\sqrt{}$}}}
\newcommand{\newcrossmark}{\scalebox{0.85}[1]{$\times$}}

\usepackage[abbrvbib, preprint]{jmlr2e}

\usepackage{amsmath,amsfonts,bm}

\def\1{\bm{1}}

\DeclareMathAlphabet{\mathsfit}{\encodingdefault}{\sfdefault}{m}{sl}
\SetMathAlphabet{\mathsfit}{bold}{\encodingdefault}{\sfdefault}{bx}{n}

\DeclareMathOperator*{\argmax}{arg\,max}

\usepackage{url}

\usepackage{physics}
\usepackage{multirow}
\usepackage[ruled,vlined]{algorithm2e}
\SetKwInOut{Parameter}{Parameters}
\usepackage{tikz}
\usepackage{tikz-cd}

\allowdisplaybreaks

\def\tilde{\widetilde}
\def\t{\tilde}
\def\hat{\widehat}

\allowdisplaybreaks

\usepackage{amsmath}
\usepackage{amssymb}
\usepackage{mathtools}
\usepackage[capitalize,noabbrev]{cleveref}

\ShortHeadings{Learning to Coordinate via Quantum Entanglement in Multi-Agent Reinforcement Learning}{Learning to Coordinate via Quantum Entanglement in Multi-Agent Reinforcement Learning}
\firstpageno{1}

\usepackage{lastpage}
\jmlrheading{23}{2026}{1-\pageref{LastPage}}{MM/YY; Revised MM/YY}{9/22}{21-0000}{Nasdaq, Inc.}

\begin{document}

\title{Learning to Coordinate via Quantum Entanglement\\ in Multi-Agent Reinforcement Learning}

% % Option 1
% \author{%
% \begin{center}
%        \name John Gardiner$^{*,1}$
%        \name Orlando Romero$^{*,1}$
%        \name Brendan Tivnan$^{1}$ \\
%        \name Nicol\`o Dal Fabbro$^{1,2}$
%        \name George J. Pappas$^{2}$
% \end{center}
% }

% Option 2
\author{%
       \name John Gardiner$^*$ \email john.gardiner@nasdaq.com \\
       \addr Nasdaq, Inc.
       \AND
       \name Orlando Romero$^*$ \email orlando.romero@nasdaq.com \\
       \addr Nasdaq, Inc.
       \AND
       \name Brendan Tivnan \email brendan.tivnan@nasdaq.com \\
       \addr Nasdaq, Inc.
       \AND
       \name Nicol\`o Dal Fabbro \email nicolo.dalfabbro@nasdaq.com \email \\
       \addr Nasdaq, Inc. \\
       \addr University of Pennsylvania
       \AND
       \name George J. Pappas \email pappasg@seas.upenn.edu \\
       \addr University of Pennsylvania
}

% % Option 3
% \author{%
%        \name John Gardiner$^*$ \email \addr Nasdaq, Inc.
%        \AND
%        \name Orlando Romero$^*$ \email \addr Nasdaq, Inc.
%        \AND
%        \name Brendan Tivnan \email \addr Nasdaq, Inc.
%        \AND
%        \name Nicol\`o Dal Fabbro \email \addr Nasdaq Inc., UPenn
%        \AND
%        \name George J. Pappas \email \addr \addr UPenn
% }

\maketitle

% \vspace{-3.5cm}
\vspace{-0.75cm}
% \vspace{-1cm}

\begin{center}
    \footnotesize $^*$Equal contribution
\end{center}

% \begin{center}
%     \footnotesize $^*$Equal contribution. Correspondence: john.gardiner@nasdaq.com
% \end{center}

% \begin{center}
%     \footnotesize $^*$Equal contribution. $^1$Nasdaq, Inc. $^2$University of Pennsylvania. \\ Correspondence: john.gardiner@nasdaq.com, orlando.romero@nasdaq.com
% \end{center}

\vspace{0.5cm}
% \vspace{0.05cm}
% \vspace{-1cm}

\begin{abstract}
The inability to communicate poses a major challenge to coordination in multi-agent reinforcement learning (MARL). 
Prior work has explored correlating local policies via \emph{shared randomness}, sometimes in the form of a \emph{correlation device}, as a mechanism to assist in decentralized decision-making. 
In contrast, this work introduces the first framework for training MARL agents to exploit shared \mbox{\emph{quantum entanglement}} as a coordination resource, which permits a larger class of communication-free correlated policies than shared randomness alone. 
This is motivated by well-known results in quantum physics which posit that, for certain single-round cooperative games with no communication, shared quantum entanglement enables strategies that outperform those that only use shared randomness. In such cases, we say that there is \mbox{\textit{quantum advantage}}.
Our framework is based on a novel differentiable policy parameterization that enables optimization over quantum measurements, together with a novel policy architecture that decomposes joint policies into a quantum coordinator and decentralized local actors.
{To illustrate the effectiveness of our proposed method, we first show that we can learn, purely from experience, strategies that attain quantum advantage in single-round games that are treated as black box oracles. 
We then demonstrate how our machinery can learn policies with quantum advantage in an illustrative multi-agent sequential decision-making problem formulated as a decentralized partially observable Markov decision process (Dec-POMDP).}
\end{abstract}

%%%%%%%%%%%%%%%%%%%%%%%%%%%%%%%%%%%%%%%%%%%%%%%%%%%%%%%%%%%%%%
%%%%%%%%%%%%%%%%%%%%%%%% Introduction %%%%%%%%%%%%%%%%%%%%%%%%
%%%%%%%%%%%%%%%%%%%%%%%%%%%%%%%%%%%%%%%%%%%%%%%%%%%%%%%%%%%%%%
\section{Introduction}
A key challenge in multi-agent reinforcement learning (MARL) is partial observability (sometimes called \mbox{\emph{information asymmetry}}), where each
agent possesses different information about the state and action processes while making decisions~\cite{kim2020communication, kao2022common, liu2023partially}.

\noindent\textit{Communication constraints.}  
Inter-agent communication is a common mechanism to combat information asymmetry~\citep{foerster2016learning, liu2023partially, schroeder2019multi}. However, in many scenarios of practical interest, such as in low-latency decentralized decision making (e.g., high-frequency trading~\cite{ding2024coordinating}) or military applications, communication is too costly, impossible, or otherwise constrained.

\noindent\textit{Shared randomness.}
Correlating policies in MARL via shared randomness has primarily been explored in the form of a \emph{correlation device} (see Section 6.2.4 of~\citet{Amato2016}'s tutorial and references therein). However this research largely focuses on computational efficiency, rather than expressiveness of policy classes.

% \pagebreak

\noindent\textit{Quantum entanglement.} %Absence of communication can severely limit the agents’ ability to coordinate their behaviors.
The seminal work by \citet{bell1964einstein} proved that quantum entanglement can result in decentralized, yet correlated, decisions that cannot be reproduced classically (that is, in the absence of entanglement). Subsequent works~\cite{original_chsh, 1313847} demonstrated the benefit that quantum entanglement can provide for cooperative decision making without communication.
See \cite{RevModPhys.86.419} for a standard review.
Despite continued research on single-round nonlocal games \cite{vaidman1999variations, mironowicz2023entangled, PhysRevA.109.042201}, analysis of the potential benefit of quantum entanglement for sequential decision-making problems is rather recent \cite{da2025entanglement, dasilva2026entanglement}. 

In this work, we contribute to the above line of research, asking the following question:

\begin{center}
    \textit{Can we learn communication-free policies that exploit\\ quantum entanglement for coordination in MARL?}
\end{center}

\textbf{Contributions.} 
We respond in the affirmative. Our contributions can be summarized as follows: 
\begin{itemize} 
    \item We delineate a hierarchy of joint policy classes for communication-free cooperative MARL (see Figure~\ref{fig:hierarchy-of-policies}), which notably includes \emph{shared randomness}\footnote{Which includes \emph{factorized} policies $\pi(\mathbf{a}|\mathbf{h}) = \prod_i\pi_i(a_i|h_i)$.} policies, \emph{shared quantum entanglement} policies (a superset to shared randomness policies, still implementable in a decentralized manner), and \emph{non-signaling} policies (the most general form of communication-free policies).
    \item We introduce the first MARL framework in which decentralized agents learn to exploit \emph{quantum entanglement} as a shared coordination resource. First, we develop $\mathsf{QuantumSoftmax}$ (see Algorithm~\ref{alg:quantum-softmax}), a differentiable transformation that maps arbitrary square complex-valued matrices to an object that formally describes a quantum measurement. Thus, our framework enables end-to-end gradient-based optimization over quantum measurements.  Second, we introduce an \emph{advice}-based policy architecture that cleanly separates joint policies into a \emph{quantum coordinator}, which samples correlated advice via quantum measurements, and local actors that condition on this advice, allowing seamless integration with policy gradient methods. Third, we instantiate this framework in a modified multi-agent proximal policy optimization (MAPPO) algorithm capable of learning entangled policies for sequential decision-making. 
    \item We first validate our framework on single-round cooperative games with theoretically established quantum advantage, confirming that our algorithm recovers known optimal entangled strategies. We then apply our framework to a multi-router multi-server queueing problem formulated as a decentralized partially observable Markov decision process (Dec-POMDP), where we learn sequential decision-making policies that achieve quantum advantage in a setting previously analyzed only through queueing-theoretic methods. 
\end{itemize}

%%%%%%%%%%%%%%%%%%%%%%%%%%%%%%%%%%%%%%%%%%%%%%%%%%%%%%%%%%%%%%
%%%%%%%%%%%%%%%%%%%%%%%% Related Work %%%%%%%%%%%%%%%%%%%%%%%%
%%%%%%%%%%%%%%%%%%%%%%%%%%%%%%%%%%%%%%%%%%%%%%%%%%%%%%%%%%%%%%

\textbf{Related Work.}
The use of quantum entanglement for distributed decision-making has a long history, often focused on the study of single-round cooperative games called nonlocal games \cite{1313847}. Though many such games with quantum advantage have been studied, they are often quite stylized. Some recent works attempt to show quantum advantage in single-round decision-making tasks with a more applied flavor, for example in rendezvous problems over graphs \cite{PhysRevA.109.042201}, in a simplified high frequency trading scenario \cite{ding2024coordinating}, and in a load balancing problem in ad hoc networks \cite{Hasanpour:2017uqy}.

While optimal quantum entangled strategies for these single-round games are often found analytically or known by construction, there is work using gradient-based approaches to discover such optimal strategies, for example \cite{bharti2019teach} and~\cite{furches2025application}. The idea of learning strategies from experience, with games treated as black boxes, was considered recently by~\citet{kerenidis2025quantum}.

The recent work of~\citet{da2025entanglement, dasilva2026entanglement}, a first of its kind, investigates the use of entanglement in decentralized, non-trivially sequential decision-making in a multi-router multi-server queueing problem whose special form allows one to theoretically show quantum advantage.
They do not use machine learning methods to discover strategies.

In contrast to the prior research, our work is the first to investigate how to learn parameterized MARL policies that exploit entanglement {during execution} to obtain a coordination advantage.

We also mention here a separate line of work which builds upon progress in quantum machine learning~\cite{oh2020tutorial}, extending it to quantum RL and quantum MARL~\cite{yun2022quantum, kwak2021introduction}. These efforts are fundamentally different from ours: they use quantum computing (typically parametrized quantum circuits) with the aim to reduce the computational burden of learning agents by reducing parameter count, required data, or training time. In this vein, the recent work by~\citet{derieuxeqmarl} studies how to leverage quantum entanglement during training in MARL to replace a centralized critic with a partially decentralized one. 
Notably, entanglement in their work serves a computational complexity purpose, to accelerate training and reduce parameter count relative to other solutions, not to achieve coordination advantages at execution time between non-communicating agents. Explicitly, they do not consider entanglement shared between actor models.

%%%%%%%%%%%%%%%%%%%%%%%%%%%%%%%%%%%%%%%%%%%%%%%%%%%%%%%%%%
%%%%%%%%%%%%%%%%%%%%%%%% Notation %%%%%%%%%%%%%%%%%%%%%%%%
%%%%%%%%%%%%%%%%%%%%%%%%%%%%%%%%%%%%%%%%%%%%%%%%%%%%%%%%%%
\textbf{Notation.}
% For a positive integer $n$, let $[n]\coloneqq\{1,\dots,n\}$. For a non-empty finite set $X$, we write $\Delta(X)$ for either the set of probability distributions over $X$ or the corresponding probability simplex. $\mathrm{expm}(A) := \sum_{k=0}^\infty\frac{A^k}{k!}$
$\Delta(X) \coloneqq  \{\text{probability distributions over } X\}$; $[n]\coloneqq\{1,\dots,n\}$; $\mathbf{x}_\mathcal{I} \coloneqq (x_i)_{i \in\mathcal{I}}$; $\mathrm{expm}(A) \coloneqq \sum_{k=0}^\infty\frac{A^k}{k!}$.

%%%%%%%%%%%%%%%%%%%%%%%%%%%%%%%%%%%%%%%%%%%%%%%%%%%%%%%%%%%%
%%%%%%%%%%%%%%%%%%%%%%%% Background %%%%%%%%%%%%%%%%%%%%%%%%
%%%%%%%%%%%%%%%%%%%%%%%%%%%%%%%%%%%%%%%%%%%%%%%%%%%%%%%%%%%%
\section{Relevant Quantum Theory}
\label{sec:quantum_theory}
This section describes the essential elements of quantum theory necessary for understanding what shared quantum entanglement entails for MARL in operational terms.
\subsection{Quantum systems and measurements}
Quantum theory stipulates how outcomes of measurements made on a quantum system relate to the physical state of that system.

\begin{definition}[Density matrix]
A \emph{density matrix} is a PSD matrix $\rho\in\mathbb{C}^{d\times d}$ such that $\mathrm{tr}(\rho) = 1$.
\end{definition}

It is postulated that every quantum system has a \emph{state}
\footnote{
    For those familiar with quantum physics: for simplicity, we will restrict ourselves to finite-dimensional systems throughout.
}, which can be represented as a density matrix $\rho$. A system or state is said to be $d$-dimensional when $\rho$ is of size $d\times d$. For a real-world quantum system, the density matrix might encode physical properties such as energy, spin, or momentum.

Note that convex combinations of density matrices are also density matrices.
The convex combination $\rho=\alpha\rho_1 + (1-\alpha)\rho_2$ represents the overall state of a system that is in state $\rho_1$ with probability $\alpha$ and state $\rho_2$ with probability $1-\alpha$.

A \emph{measurement} on a physical system is a procedure or experiment that can be performed on the system, resulting in some outcome dependent on the system state.
For example, in a physical system a measurement might measure the energy of the system, in which case the measurement outcome is the value of the energy.

\begin{definition}[POVM]
A \emph{positive operator-valued measure} (POVM) is a collection $M = (M_j)_{j\in [m]}$ of PSD matrices $M_1,\ldots,M_m \in \mathbb{C}^{d\times d}$ such that $M_1 + \ldots + M_m = I_d$.
\end{definition}
In quantum theory, a measurement with $m$ possible outcomes, compatible with a $d$-dimensional quantum system, is encoded as a POVM of the form $M \in (\mathbb{C}^{d\times d})^m$.
The outcome of such a measurement is probabilistic, with probabilities postulated by a law known as the 
\emph{Born Rule}:

\textbf{Born Rule}: Given a quantum system with state $\rho$, and given a measurement acting on the system, described by the POVM $M \in  (\mathbb{C}^{d\times d})^m$, we have $\mathbb{P}(\text{outcome $j$}) = \tr(\rho M_j)$ for each $j\in [m]$.
% \begin{equation}
% \mathbb{P}(\text{outcome $j$})
% =
% \tr(\rho M_j),
% \end{equation}

% \textbf{Born Rule:} Given a quantum system with state $\rho$, and given a measurement with associated POVM  $M \in  (\mathbb{C}^{d\times d})^m$ acting on that system, then
% \begin{equation}
% \mathbb{P}(\text{outcome $j$})
% =
% \tr(\rho M_j),
% \end{equation}
% for each $j \in [m]$.

\subsection{Joint systems and quantum entanglement}
\label{sec:joint_systems}
An important aspect of the quantum formalism is how it treats joint systems.
Given two quantum systems with dimension $d_1$ and dimension $d_2$ respectively, their \emph{joint system} has dimension $d_1d_2$.
Whereas the density matrices of the separate systems are elements of $\mathbb{C}^{d_1\times d_1}$ and $\mathbb{C}^{d_2\times d_2}$, density matrices of the joint system are trace 1, PSD elements of the much larger space
$\mathbb{C}^{d_1\times d_1}\otimes\mathbb{C}^{d_2\times d_2}\cong\mathbb{C}^{d_1d_2\times d_1d_2}$.

Given a state $\rho_A$ for system $A$ and a state $\rho_B$ for system $B$, a joint state for the combined system $AB$ can be made via a tensor product (in concrete terms a Kronecker product): $\rho_A\!\otimes\!\rho_B$.
Not all joint states of $AB$ can be written as the tensor product of a state of $A$ and a state of $B$.
This is not so different from the fact that not all joint probability distributions are the product of disjoint distributions. 
States of the form $\rho_A\!\otimes\!\rho_B$, or that are convex combinations of states of that form, are called separable\footnote{
    With the understanding of convex combinations of density matrices given above, a separable state represents the state of a system that is known to be in a product state, but with uncertainty as to which product state.
    This motivates the definition of entanglement as something more than just uncertainty over product states.
}.
Systems $A$ and $B$ are said to be \emph{entangled} if their joint state is not separable:
\begin{definition}[Entanglement]
A joint state of $d$- and $d'$-dimensional systems is \emph{entangled} if it cannot be written in the form $\sum_i\alpha_i\,\rho_i\!\otimes\!\rho'_i$ for any $\alpha_i\geq0$, $\sum_i\alpha_i=1$ and any $d$-dimensional density matrices $\rho_i$ and $d'$-dimensional density matrices $\rho'_i$.
\end{definition}

A party with access to system $A$ and a party with access to system $B$ are said to \emph{share entanglement} if $A$ and $B$ are entangled.
A concrete example of an entangled state is the so-called Bell state with density matrix
% Concrete examples of entangled states are the so-called Bell states, one of which has density matrix
\begin{equation}\label{eq:bell_state}
    \rho_\text{Bell}
    =
    \begin{bmatrix}
    \frac{1}{2}&0&0&\frac{1}{2}\\
    0&0&0&0\\
    0&0&0&0\\
    \frac{1}{2}&0&0&\frac{1}{2}\\
    \end{bmatrix},
\end{equation}
which it is possible to show cannot be written as the Kronecker product of two $2\times2$ density matrices or as the convex combination of such Kronecker products.

Similarly to states, measurements on two systems can be formally combined into a \emph{joint measurement} via a tensor product:
if $\{M_1,\cdots,M_{m_A}\}$ is a measurement on a system $A$ and $\{N_1,\cdots,N_{m_B}\}$ is a measurement on a system $B$, then $\{M_1\!\otimes\!N_1,\cdots,M_\ell\!\otimes\!N_j,\cdots,M_{m_A}\!\otimes\!N_{m_B}\}$ is a measurement on their joint system, with joint outcomes $(\ell, j)\in\left[m_A\right]\times\left[m_B\right]$.

As usual, the Born rule tells us the probabilities of obtaining these measurement outcomes:
suppose two parties, Alice and Bob, share an entangled state $\rho_{AB}$.
If Alice makes a measurement $\{M_1,\cdots,M_{m_A}\}$ on her part, and Bob makes a measurement $\{N_1,\cdots,N_{m_B}\}$ on his part, the probability that Alice obtains outcome $\ell$ and Bob obtains outcome $j$ is
\begin{equation}
    p(\ell,j)
    =
    \trace(\rho_{AB}\,M_\ell\otimes N_j).
\end{equation}
This all straightforwardly generalizes to measurements made on states shared by more than two parties:
\begin{equation}
    p(j_1,\ldots,j_n)
    =
    \trace(\rho\,M^{(1)}_{j_1}\!\otimes\ldots\otimes\! M^{(n)}_{j_n}).
\end{equation}
Here, $\rho$ is the state of a joint system of $n$ parts and each $\{M^{(k)}_j\}_j$ is a measurement made on the $k$-th part.

As we will see in Section \ref{sec:policy_spaces}, this particular form for the probabilities of measurement outcomes has implications for joint decision-making.

%%%%%%%%%%%%%%%%%%%%%%%%%%%%%%%%%%%%%%%%%%%%%%%%%%%%%%%%%%%%%
%%%%%%%%%%%%%%%%%%%%%%% Policy Spaces %%%%%%%%%%%%%%%%%%%%%%%
%%%%%%%%%%%%%%%%%%%%%%%%%%%%%%%%%%%%%%%%%%%%%%%%%%%%%%%%%%%%%
\section{Communication, Coordination, and Entangled Policies}\label{sec:policy_spaces}
Throughout the rest of the paper, we use notation common in the formalism of Dec-POMDPs~\cite{Amato2016}. For a detailed summary, please refer to Appendix~\ref{sec:decPOMDPs}. We consider $n$ agents acting in the same environment. Given a joint action space $\boldsymbol{\mathcal{A}} = \mathcal{A}_1\times\ldots\times\mathcal{A}_n$, a \emph{(joint) policy} is a function $\pi:\bigcup_{t=0}^\infty\boldsymbol{\mathcal{H}}_t \to \Delta(\boldsymbol{\mathcal{A}})$ with $\pi:\mathbf{h} \mapsto \pi(\cdot|\mathbf{h})$, from which the agents jointly sample and choose their actions $\mathbf{a}_t =    (a_{i,t})_{i\in [n]}$ based on their observation-action histories. More precisely, $\mathbf{a}_t \sim \pi(\cdot|\mathbf{h}_t)$, where $\mathbf{h}_t \in \boldsymbol{\mathcal{H}}_t$ and $\mathbf{h}_t = (h_{i,t})_{i\in[n]}$, with $h_{i,t}$ denoting the observation-action history of agent $i$ up to time step $t$.

Restrictions in communication between agents can be formalized as constraints on the class $\boldsymbol{\Pi}$ of joint policies allowed in the central task of policy evaluation. The simplest class we can consider is the class~$\boldsymbol{\Pi}_\mathsf{C}$ of \mbox{\emph{full communication}} policies, which imposes no restrictions in communication, i.e. every joint policy is allowed.

\begin{figure}[t]
    \centering
    \includegraphics[width=0.6\linewidth]{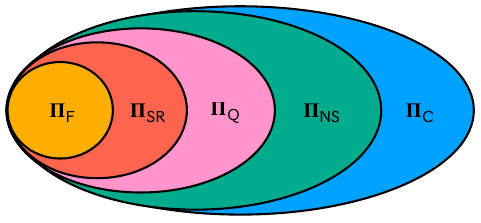}
    \caption{Hierarchy of policies. Here, $\boldsymbol{\Pi}_{\mathsf{F}}$ is the space of \emph{factorized} policies, $\boldsymbol{\Pi}_{\mathsf{SR}}$ the space of \emph{shared randomness} policies, $\boldsymbol{\Pi}_{\mathsf{Q}}$ the space of \emph{shared (quantum) entanglement} policies, $\boldsymbol{\Pi}_{\mathsf{NS}}$ the space of \emph{non-signaling} policies, and $\boldsymbol{\Pi}_{\mathsf{C}}$ the space of all joint policies.}
    \label{fig:hierarchy-of-policies}
\end{figure}

\textbf{Factorized Policies.} 
A commonly used restriction in the literature for Dec-POMDPs is given by the class~$\boldsymbol{\Pi}_{\mathsf{F}}$ of \emph{factorized} policies~\cite{Amato2016}, given by policies of the form $\pi(\mathbf{a}|\mathbf{h}) = \prod_{i=1}^n \pi_i(a_i | h_i)$
for some \emph{local policies} $\pi_1,\ldots,\pi_n$, with $\pi_i: \mathcal{H}_i \to \Delta(\mathcal{A}_i)$ for each $i\in [n]$. Policies of this form are widely adopted in cooperative MARL due to their simplicity in interpretation, ease of executing in a decentralized fashion, and due to being crucial to certain methods that exploit their functional form, such as the counterfactual multi-agent (COMA) policy gradient~\cite{foerster2024counterfactualmultiagentpolicygradients}.

\textbf{Shared Randomness.}
Next, we consider the class~$\boldsymbol{\Pi}_{\mathsf{SR}}$ of \emph{shared randomness} policies, given in its functional form by
\begin{equation}
\label{eq:shared_randomness_policies}
    \pi(\mathbf{a}|\mathbf{h}) = \sum_{x\in\mathcal{X}} q(x)\prod_{i=1}^n \pi_i(a_i|x,h_i),
\end{equation}
for some \emph{source of shared randomness} $q \in \Delta(\mathcal{X})$, with non-empty $\mathcal{X}$, and for some \emph{local policies} $\pi_1,\ldots,\pi_n$, with $\pi_i: \mathcal{X}\times\mathcal{H}_i \to \Delta(\mathcal{A}_i)$ for each $i\in [n]$. This space of policies is of relevance because it is is the broadest class of policies that can be physically implemented by decentralized non-communicating agents without quantum means. Conceptually, the shared randomness may reflect a pre-determined strategy of the agents, some shared random observation of the environment that can be used too coordinate actions, or may simply reflect a shared causal past of the agents.

For an alternative characterization of~$\boldsymbol{\Pi}_{\mathsf{F}}$ and~$\boldsymbol{\Pi}_{\mathsf{SR}}$, the following result can be established:
\begin{proposition}
$\boldsymbol{\Pi}_{\mathsf{SR}}$ is the convex hull of~$\boldsymbol{\Pi}_{\mathsf{F}}$. If~$\boldsymbol{\mathcal{H}}$ and~$\boldsymbol{\mathcal{A}}$ are finite, then $\boldsymbol{\Pi}_{\mathsf{F}}$ and $\boldsymbol{\Pi}_{\mathsf{SR}}$ can be represented as subsets of the Euclidean space $\mathbb{R}^{H\cdot A}$, with $H = |\boldsymbol{\mathcal{H}}|$ and $A = |\boldsymbol{\mathcal{A}}|$. Under such a representation, the set of deterministic policies in $\boldsymbol{\Pi}_{\mathsf{F}}$ is finite, with size $A^H$, and $\boldsymbol{\Pi}_{\mathsf{SR}}$ is a convex polytope. 
\end{proposition}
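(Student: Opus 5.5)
Throughout, joint policies are identified with functions $\mathbf{h}\mapsto\pi(\cdot|\mathbf{h})$. When $\boldsymbol{\mathcal{H}}$ and $\boldsymbol{\mathcal{A}}$ are finite, this means vectors $(\pi(\mathbf{a}|\mathbf{h}))_{\mathbf{h},\mathbf{a}}\in\mathbb{R}^{H\cdot A}$; the representation claim is then immediate, since $\boldsymbol{\Pi}_{\mathsf{F}}\subseteq\boldsymbol{\Pi}_{\mathsf{SR}}\subseteq\boldsymbol{\Pi}_{\mathsf{C}}\subseteq\mathbb{R}^{H\cdot A}$. Convex combinations are taken pointwise in $\mathbf{h}$.

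For the convex hull claim I will prove two inclusions. First, $\boldsymbol{\Pi}_{\mathsf{SR}}\subseteq\mathrm{conv}(\boldsymbol{\Pi}_{\mathsf{F}})$. Given $\pi$ as in \eqref{eq:shared_randomness_policies}, each map $\mathbf{h}\mapsto\prod_i\pi_i(a_i|x,h_i)$ is a factorized policy $\pi^x$, and $\pi=\sum_x q(x)\pi^x$. If $\mathcal{X}$ is finite, this is a finite convex combination. If $\mathcal{X}$ is infinite (countable, or a general probability space), I would either read ``convex hull'' as the closed convex hull, or, in the finite case, reduce to finitely many terms via Carathéodory in $\mathbb{R}^{H\cdot A}$. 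This measure-theoretic subtlety is the only delicate point, and I would handle it by restricting to finite $\mathcal{X}$ in the general statement or by invoking Carathéodory in the finite setting.

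Second, $\mathrm{conv}(\boldsymbol{\Pi}_{\mathsf{F}})\subseteq\boldsymbol{\Pi}_{\mathsf{SR}}$. A finite combination $\sum_{k=1}^K\alpha_k\prod_i\pi_i^{k}(a_i|h_i)$ is an SR policy with $\mathcal{X}=[K]$, $q(k)=\alpha_k$ and $\pi_i(a_i|k,h_i)=\pi_i^k(a_i|h_i)$. This also shows $\boldsymbol{\Pi}_{\mathsf{SR}}$ is convex, since mixing two SR policies amounts to taking the disjoint union of their randomness sources.

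For the finite case, deterministic factorized policies are the tuples of deterministic local maps $h_i\mapsto a_i\in\mathcal{A}_i$. Their number is $\prod_i|\mathcal{A}_i|^{|\mathcal{H}_i|}$, which is finite; I would read $\boldsymbol{\mathcal{H}}$ as the product of the $\mathcal{H}_i$ and check the count against the stated $A^H$ (any finite bound suffices for the polytope claim).

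The key step is to show that every factorized policy is a convex combination of deterministic factorized ones. Each local policy $\pi_i$ is a point in the product of simplices $\prod_{h_i}\Delta(\mathcal{A}_i)$, and so is a mixture of deterministic local maps: take $\pi_i=\sum_{f_i}\big(\prod_{h_i}\pi_i(f_i(h_i)|h_i)\big)\delta_{f_i}$. Expanding the product over $i$ of these mixtures, by multilinearity, writes $\pi$ as a convex combination of the deterministic factorized policies $\prod_i\delta_{f_i}$. Therefore $\boldsymbol{\Pi}_{\mathsf{SR}}=\mathrm{conv}(\boldsymbol{\Pi}_{\mathsf{F}})=\mathrm{conv}(\text{deterministic factorized policies})$. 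This is the convex hull of finitely many points, hence a convex polytope.
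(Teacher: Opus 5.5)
Your main argument is correct and follows the route the paper itself indicates. The paper gives no formal proof: Appendix~\ref{appendix:bell_inequality_violation} only asserts that $\boldsymbol{\Pi}_{\mathsf{SR}}$ is ``not hard to see'' to be the convex hull of the deterministic factorized policies $\prod_i\delta_{a_i,f_i(h_i)}$, and hence a polytope. Your multilinear expansion (with $\mathbf{h}$-independent weights $\prod_i w_i(f_i)$) and your finite-mixture construction supply exactly the details it omits.

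\emph{The count.} The check you defer actually fails. With $\boldsymbol{\mathcal{H}}=\prod_i\mathcal{H}_i$, the number $A^H=|\boldsymbol{\mathcal{A}}|^{|\boldsymbol{\mathcal{H}}|}$ counts \emph{all} deterministic joint policies, i.e.\ the deterministic elements of $\boldsymbol{\Pi}_{\mathsf{C}}$. Your count $\prod_i|\mathcal{A}_i|^{|\mathcal{H}_i|}$ is the right one for $\boldsymbol{\Pi}_{\mathsf{F}}$, and it is in general strictly smaller: two agents with two histories and two actions each give $16$ versus $4^4=256$. So the proposition's ``size $A^H$'' holds only as an upper bound, and you should say so rather than leave it open. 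The polytope claim needs only finiteness, so nothing else is affected.

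\emph{Infinite $\mathcal{X}$.} This is the one real gap. The paper's definition allows any non-empty $\mathcal{X}$, so countably supported $q$ is in scope, and your three options are not interchangeable.

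\textbf{Finite setting.} Carathéodory alone does not close the gap. It presupposes that $\sum_x q(x)\pi^x$ already lies in $\mathrm{conv}(\boldsymbol{\Pi}_{\mathsf{F}})$, which is the point at issue. The fix is to reorder your argument:
\begin{enumerate}
\item First prove $\boldsymbol{\Pi}_{\mathsf{F}}\subseteq P$, where $P$ is the convex hull of the finitely many deterministic factorized policies (your last paragraph does this).
\item The renormalized partial sums of $\sum_x q(x)\pi^x$ lie in $P$ and converge entrywise to $\pi$. Since $P$ is closed, $\pi\in P$.
\item Hence $\boldsymbol{\Pi}_{\mathsf{SR}}\subseteq P\subseteq\mathrm{conv}(\boldsymbol{\Pi}_{\mathsf{F}})\subseteq\boldsymbol{\Pi}_{\mathsf{SR}}$, with no Carathéodory needed.
\end{enumerate}

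\textbf{Infinite $\boldsymbol{\mathcal{H}}$.} Restricting to finite $\mathcal{X}$ is necessary, not a convenience. Take $\mathcal{H}_1=\mathcal{H}_2=\mathcal{X}=\mathbb{N}$, $q(x)=2^{-x-1}$ and $\pi_i(1|x,h_i)=\mathbf{1}[h_i=x]$. Then the array $\pi(1,1|h_1,h_2)=q(h_1)\mathbf{1}[h_1=h_2]$ has infinite rank, whereas any $K$-term mixture of factorized policies gives rank at most $K$.

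\textbf{Closed-hull reading.} This would additionally need $\overline{\mathrm{conv}}(\boldsymbol{\Pi}_{\mathsf{F}})\subseteq\boldsymbol{\Pi}_{\mathsf{SR}}$, which you do not prove. It also fails for countable $\mathcal{X}$. A shared uniform threshold $a_i=\mathbf{1}[\lambda\le h_i]$ with $\mathcal{H}_i=\mathbb{Q}\cap(0,1)$ lies in the pointwise-closed hull. But the diagonal constraints force any countable mixture reproducing it to consist of deterministic monotone cuts whose cut distribution would have to be uniform, which is impossible for an atomic distribution.

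With finite $\mathcal{X}$, your two inclusions prove $\boldsymbol{\Pi}_{\mathsf{SR}}=\mathrm{conv}(\boldsymbol{\Pi}_{\mathsf{F}})$ exactly for arbitrary $\boldsymbol{\mathcal{H}}$. That is the reading under which the first sentence of the proposition is true.
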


\textbf{Non-Signaling.}
Before we introduce the class of policies of primary interest in this paper, it is important to understand what ``no communication'' truly means, which we can formalize via the class~$\boldsymbol{\Pi}_\mathsf{NS}$ of \emph{non-signaling} policies. This class is defined, not in a functional form, but in terms of the following property that policies must satisfy:
\begin{equation}
    \sum_{\mathbf{a}_{\mathcal{J}}} \pi( \mathbf{a}_{\mathcal{I}}, \mathbf{a}_{\mathcal{J}} | \mathbf{h}_{\mathcal{I}}, \mathbf{h}_{\mathcal{J}} ) = \sum_{\mathbf{a}_{\mathcal{J}}} \pi( \mathbf{a}_{\mathcal{I}}, \mathbf{a}_{\mathcal{J}} | \mathbf{h}_{\mathcal{I}}, \mathbf{h}_{\mathcal{J}}' )
\end{equation}
for every partition $\{\mathcal{I},\mathcal{J}\}$ of $[n]$ and every $\mathbf{a}_{\mathcal{I}}, \mathbf{h}_{\mathcal{I}}, \mathbf{h}_{\mathcal{J}}, \mathbf{h}_{\mathcal{J}}'$. Conceptually, under non-signaling policies, the decisions taken by a group of agents can only depend on their observations and not on the observations of the remaining agents. In particular, there cannot be a mechanism for a group of agents to \emph{signal} to the remaining agents any information about what their (combined) observation-action histories are at a given point in time. 
In this sense, no communication can occur between agents as no information is allowed to be transmitted between disjoint groups of agents.

% \subsection{Shared Entanglement Policies}
\textbf{Shared Entanglement.}
\label{sec:quantum-policies}
We now introduce the primary policy class of interest: the class~$\boldsymbol{\Pi}_\mathsf{Q}$ of \emph{shared (quantum) entanglement} policies, given by policies of the form
\begin{equation}
    % \pi(\mathbf{a}|\mathbf{h}) = \mathrm{tr}\left(\rho\, \bigotimes\nolimits_{i=1}^n \!M_i(a_i | h_i) \right),
    \pi(\mathbf{a}|\mathbf{h}) = \mathrm{tr}\left(\rho\, \bigotimes_{i=1}^n \!M_i(a_i | h_i) \right),
    \label{eq:quantum-policy}
\end{equation}
for some density matrix~$\rho \in \mathbb{C}^{d\times d}$ and POVM-valued functions $M_1,\ldots,M_n$ with $M_i:\mathcal{H}_i \to \mathsf{POVM}_{m,d}$ for $i\in [n]$.

Conceptually, policies in this class can be understood as those implementable by decentralized, non-communicating agents
that each have access to different, possibly entangled, quantum systems and \emph{choose} what measurements to carry out based on locally available information only.
Of crucial importance, this space of policies is strictly larger than the space of shared randomness policies.
In operational terms, this implies that non-communicating agents with shared entanglement can implement policies with correlated behavior that is not possible by classical means such as shared randomness.
This notion, as well as other relationships between the classes of policies discussed in this section, are formalized in Proposition~\ref{prop:hierarchy}, which is illustrated in Figure~\ref{fig:hierarchy-of-policies} as a hierarchy of policy spaces. 
For additional details and discussion of so-called \emph{Bell inequality violation}, see Appendix~\ref{appendix:bell_inequality_violation}.

Additionally, a crucial aspect of quantum entanglement as a shared resource is that, in principle, it can be physically implemented in a decentralized fashion. 
More details can be found in Appendix~\ref{appendix:physical_implementation}. 
In Figure~\ref{fig:quantum-policy}, we diagrammatically illustrate the decentralized nature of a possible implementation of shared entanglement policies (leveraging Algorithm~\ref{alg:quantum-softmax}, to be introduced in Section~\ref{alg:quantum-softmax}). 

Lastly, it is worth mentioning here that popular treatments of quantum entanglement occasionally give the impression that entanglement involves a transfer of information, perhaps even ``faster than light" communication.
This is a misconception.
For us, the special property of quantum entanglement is that it allows decentralized, non-communicating actors to sample from conditional probability distributions of the form given in~\eqref{eq:quantum-policy}.
In fact, re-iterating Proposition~\ref{prop:hierarchy}, one can show that such policies satisfy the no-signaling equation, formally confirming the well-understood fact that there is no communication involved here.

\begin{proposition}
$\boldsymbol{\Pi}_\mathsf{F} \subsetneq \boldsymbol{\Pi}_\mathsf{SR} \subsetneq \boldsymbol{\Pi}_\mathsf{Q} \subsetneq \boldsymbol{\Pi}_\mathsf{NS} \subsetneq \boldsymbol{\Pi}_\mathsf{C}$.
\label{prop:hierarchy}
\end{proposition}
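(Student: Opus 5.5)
The plan splits the chain of Proposition~\ref{prop:hierarchy} into four inclusions and four strictness claims. Throughout, a policy is treated as a family of conditional distributions indexed by $\mathbf{h}$. The strict separations only need to hold for some choice of history and action spaces, and they can be taken as simple as the single-round CHSH setting: $n=2$, two possible histories per agent, and two actions per agent.

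The inclusions come first.
\begin{itemize}
\item $\boldsymbol{\Pi}_\mathsf{F}\subseteq\boldsymbol{\Pi}_\mathsf{SR}$: take $|\mathcal{X}|=1$.
\item $\boldsymbol{\Pi}_\mathsf{SR}\subseteq\boldsymbol{\Pi}_\mathsf{Q}$: take $d=|\mathcal{X}|$ in every local factor and set $\rho=\sum_x q(x)\,e_xe_x^\top\otimes\cdots\otimes e_xe_x^\top$, a classically correlated diagonal state. For the measurements, set $M_i(a_i|h_i)=\sum_x \pi_i(a_i|x,h_i)\,e_xe_x^\top$. These are PSD and sum to $I$ over $a_i$. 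The trace then reproduces \eqref{eq:shared_randomness_policies} exactly. If all factors must share a common dimension $d$, pad with zeros, adding the complementary projector to one outcome so the measurements still sum to $I$.
\item $\boldsymbol{\Pi}_\mathsf{Q}\subseteq\boldsymbol{\Pi}_\mathsf{NS}$: summing over $\mathbf{a}_\mathcal{J}$ replaces each $M_j(\cdot|h_j)$, $j\in\mathcal{J}$, by $\sum_{a_j}M_j(a_j|h_j)=I$. The resulting marginal $\mathrm{tr}(\rho\,\bigotimes_{i\in\mathcal{I}}M_i(a_i|h_i)\otimes I)$ is independent of $\mathbf{h}_\mathcal{J}$.
\item $\boldsymbol{\Pi}_\mathsf{NS}\subseteq\boldsymbol{\Pi}_\mathsf{C}$: this holds by definition.
\end{itemize}

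The strictness claims are as follows.
\begin{itemize}
\item $\boldsymbol{\Pi}_\mathsf{NS}\subsetneq\boldsymbol{\Pi}_\mathsf{C}$: the policy $\pi(a_1,a_2|h_1,h_2)=\mathbb{1}[a_1=h_2]$ (with $a_2$ fixed) signals, since agent $1$'s marginal depends on $h_2$.
\item $\boldsymbol{\Pi}_\mathsf{F}\subsetneq\boldsymbol{\Pi}_\mathsf{SR}$: the perfectly correlated policy $\pi(a_1,a_2)=\tfrac12\mathbb{1}[a_1=a_2]$, which ignores histories, lies in $\boldsymbol{\Pi}_\mathsf{SR}$ with a fair shared coin. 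It is not a product distribution, because both marginals are uniform but $\pi(0,1)=0\neq\tfrac14$.
\item $\boldsymbol{\Pi}_\mathsf{Q}\subsetneq\boldsymbol{\Pi}_\mathsf{NS}$: use the PR box $\pi(a_1,a_2|h_1,h_2)=\tfrac12\mathbb{1}[a_1\oplus a_2=h_1h_2]$. It is non-signaling because both marginals are uniform, and it wins the CHSH game with probability $1$. Tsirelson's bound caps every $\boldsymbol{\Pi}_\mathsf{Q}$ policy at winning probability $\cos^2(\pi/8)<1$.
\item $\boldsymbol{\Pi}_\mathsf{SR}\subsetneq\boldsymbol{\Pi}_\mathsf{Q}$: take $\rho_\text{Bell}$ from \eqref{eq:bell_state} with the standard CHSH projective measurements (agent 1 at angles $0,\pi/4$, agent 2 at angles $\pm\pi/8$). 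This policy wins CHSH with probability $\cos^2(\pi/8)\approx0.854$.
\end{itemize}

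The remaining step is to show that no $\boldsymbol{\Pi}_\mathsf{SR}$ policy beats $3/4$. By the first proposition, $\boldsymbol{\Pi}_\mathsf{SR}$ is the convex hull of $\boldsymbol{\Pi}_\mathsf{F}$, and the winning probability is linear in $\pi$. It therefore suffices to check deterministic factorized policies. There are $16$ of them, and a parity argument bounds them all: the four winning conditions XOR to $1$, while the left-hand sides XOR to $0$, so at most three of the four conditions can hold.

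The main obstacle is $\boldsymbol{\Pi}_\mathsf{Q}\subsetneq\boldsymbol{\Pi}_\mathsf{NS}$, since it needs an upper bound over all dimensions $d$, states, and POVMs. I would cite Tsirelson's theorem directly, or sketch its proof in three steps. First, Naimark dilation lets one assume projective measurements, which yield $\pm1$ observables $A_0,A_1,B_0,B_1$. Second, the CHSH operator $C=A_0\otimes(B_0+B_1)+A_1\otimes(B_0-B_1)$ satisfies $C^2=4I-[A_0,A_1]\otimes[B_0,B_1]$. Third, since $\|[A_0,A_1]\|\le 2$ and likewise for the $B$ commutator, $\|C^2\|\le 8$, so $\|C\|\le2\sqrt2$, giving winning probability $\tfrac12+\tfrac{\mathrm{tr}(\rho C)}{8}\le\tfrac12+\tfrac{\sqrt2}{4}<1$.
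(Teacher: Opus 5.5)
The paper never writes out a proof of Proposition~\ref{prop:hierarchy}. Its only supporting material is:
the CHSH example of Appendix~\ref{appendix:chsh}, which uses the Bell state with explicit measurements to reach $\cos^2(\pi/8)$ against a cited classical optimum of $3/4$;
the one-line remark that $\boldsymbol{\Pi}_\mathsf{Q}$ policies satisfy the non-signaling equation;
and the unproved claim that $\boldsymbol{\Pi}_\mathsf{SR}$ is the convex hull of $\boldsymbol{\Pi}_\mathsf{F}$.
Where your proposal overlaps with this, you take the same route. The rest is correct and fills actual holes: the diagonal classically-correlated state embedding $\boldsymbol{\Pi}_\mathsf{SR}$ into $\boldsymbol{\Pi}_\mathsf{Q}$, the parity argument for the $3/4$ bound, and the two easy outer separations. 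Above all, it settles $\boldsymbol{\Pi}_\mathsf{Q}\subsetneq\boldsymbol{\Pi}_\mathsf{NS}$, which the paper never addresses; your PR box plus Tsirelson argument handles it, and both the identity $C^2=4I-[A_0,A_1]\otimes[B_0,B_1]$ and the conversion $\tfrac12+\mathrm{tr}(\rho C)/8$ check out.

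One step relies on a hypothesis you leave implicit. Taking $d=|\mathcal{X}|$ requires $\mathcal{X}$ to be finite, whereas \eqref{eq:shared_randomness_policies} allows any $\mathcal{X}$ and the paper only uses finite-dimensional systems. When $\boldsymbol{\mathcal{H}}$ and $\boldsymbol{\mathcal{A}}$ are finite this is easy to repair. Every policy in $\boldsymbol{\Pi}_\mathsf{SR}$, even with countable $\mathcal{X}$, lies in the polytope spanned by the finitely many deterministic factorized policies. By Carath\'eodory it is then a mixture of at most $HA+1$ of them, so you can swap in that finite $\mathcal{X}$ before embedding.

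You should state this explicitly, because for infinite history sets the inclusion $\boldsymbol{\Pi}_\mathsf{SR}\subseteq\boldsymbol{\Pi}_\mathsf{Q}$ actually fails for a single finite-dimensional $\rho$. Take $\mathcal{H}_1=\mathcal{H}_2=\mathcal{X}=\mathbb{N}$ and $q(x)=2^{-x-1}$, and let agent $i$ answer $1$ exactly when $x=h_i$. Consider any quantum realization, purified to $\lvert\psi\rangle$, with contractions $A_h=M_1(0|h)-M_1(1|h)$ and $B_h=M_2(0|h)-M_2(1|h)$. Perfect agreement at $h_1=h_2=h$ forces $(A_h\otimes I)\lvert\psi\rangle=(I\otimes B_h)\lvert\psi\rangle$, by Cauchy--Schwarz. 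The Gram matrix of these vectors is then $\mathbf{1}\mathbf{1}^\top-2(q\mathbf{1}^\top+\mathbf{1}q^\top)+4\,\mathrm{diag}(q)$. Its $N\times N$ principal submatrices have rank at least $N-2$ for every $N$, which no fixed finite dimension can accommodate. So both the proposition and your proof should be read with $\boldsymbol{\mathcal{H}}$ and $\boldsymbol{\mathcal{A}}$ finite, which is the setting the paper itself adopts in the geometric discussion of Appendix~\ref{appendix:bell_inequality_violation}.
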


Finally, we highlight a useful fact that we will return to later.
Any joint policy $\pi(\mathbf{a}|\mathbf{h})$ can be written in the form
\begin{equation}\label{eq:coordinated-policy-parameterized}
    \pi(\mathbf{a}|\mathbf{h}) = \sum_\mathbf{x} q(\mathbf{x}|\mathbf{h})\prod_{i=1}^n \pi_{i}(a_i|x_i,h_i),
\end{equation}
with  $\pi_i:\mathcal{X}_i\times\mathcal{H}_i \to \Delta(\mathcal{A}_i)$ for $i\in [n]$ and $q:\boldsymbol{\mathcal{H}} \to \Delta(\boldsymbol{\mathcal{X}})$ for some non-empty sets $\mathcal{X}_1,\ldots,\mathcal{X}_n$ and $\boldsymbol{\mathcal{X}} = \mathcal{X}_1 \times \ldots \times \mathcal{X}_n$.
This form separates the joint policy into two parts: factorized individual policies $\prod_{i=1}^n \pi_{i}(a_i|x_i,h_i)$, now conditional on an additional input $x_i$, and a probability distribution $q(\mathbf{x}|\mathbf{h})$ from which the vector $\mathbf{x}=(x_1,\ldots,x_n)$ is drawn.
The additional input $x_i$ could represent the outcome of a quantum measurement, a shared random variable, or something else depending on the space of policies being represented.
Specifically, shared randomness policies are precisely those where $q(\mathbf{x}|\mathbf{h})$ does not depend on $\mathbf{h}$.
On the other hand, policies for agents sharing quantum entanglement are precisely those where $q(\mathbf{x}|\mathbf{h})$ takes the form $q(\mathbf{x}|\mathbf{h})=\trace\left(\rho\,\bigotimes_{i=1}^n M_{i}(x_i|h_i)\right)$, where $M_{i,\theta}(\,\cdot\,|h_i)$ are POVMs.
See Appendix \ref{appendix:quantum_coordinators} for a proof of this equivalence.

%%%%%%%%%%%%%%%%%%%%%%%%%%%%%%%%%%%%%%%%%%%%%%%%%%%%%%%%%%%%%
%%%%%%%%%%%%%%%%%%%%%%%% Methodology %%%%%%%%%%%%%%%%%%%%%%%%
%%%%%%%%%%%%%%%%%%%%%%%%%%%%%%%%%%%%%%%%%%%%%%%%%%%%%%%%%%%%%
\section{Learning to Coordinate via Quantum Entanglement in MARL}
\label{sec:methodology}

In this section, we derive a differentiable parameterization of policies in~$\boldsymbol{\Pi}_\mathsf{Q}$ and discuss implementation details regarding training, for both nonlocal games and Dec-POMDPs.

%%%%%%%%%%%%%%%%%%%%%%%% POVM Softmax %%%%%%%%%%%%%%%%%%%%%%%%
% \subsection{Quantum Softmax}
\subsection{Parameterizing Shared Entanglement Policies}
\label{sec:povm_softmax}

\begin{figure}[t]
    \centering
    \includegraphics[width=0.6\linewidth, trim ={1.55cm 9cm 11.5cm 4cm}, clip] {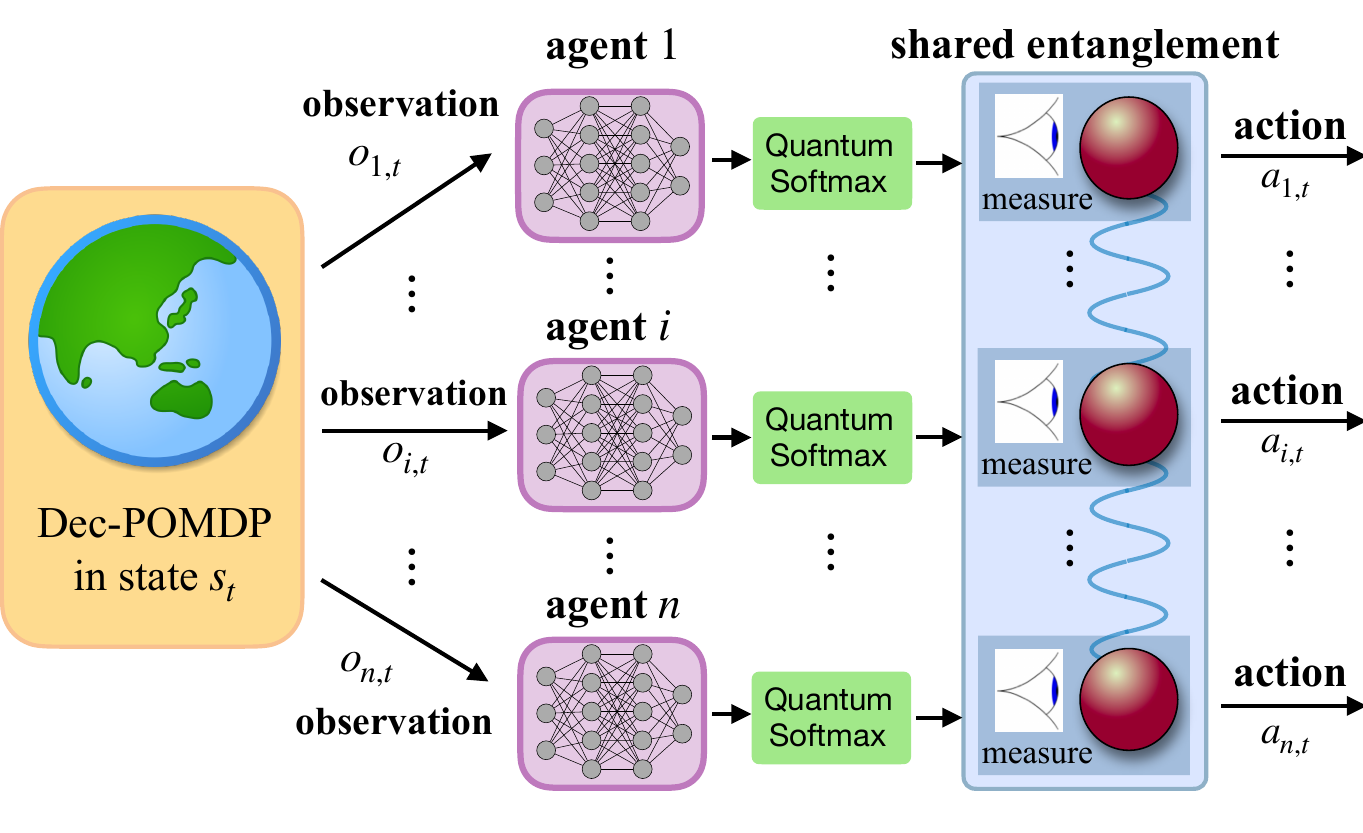}
    \caption{Decentralized and parameterized implementation of a joint policy with shared quantum entanglement.}
    %In particular, $\pi_\theta(\mathbf{a}|\mathbf{h}) = \mathrm{tr}\left[\rho\bigotimes_{i=1}^n M_{i,\theta}(a_i|h_i)\right]$.}
    \label{fig:quantum-policy}
\end{figure}

As detailed in Section \ref{sec:quantum_theory}, a measurement on a \mbox{$d$-dimensional} quantum system with $m$~possible outcomes can be formalized as a \emph{positive operator-valued measure} (POVM). We can denote the space of such POVMs as $\mathsf{POVM}_{m,d}$.
Note that $\mathsf{POVM}_{m,d}$ generalizes the probability simplex $\Delta_m = \{\mathbf{p}\in\mathbb{R}^m: \mathbf{p}\geq \mathbf{0},\, \mathbf{1}^\mathsf{T}\mathbf{p} = 1\}$, which corresponds to the case $d=1$. To parameterize probability distributions with support of finite size~$m$, it is standard practice to use $\mathsf{softmax}(\mathbf{z}) := \frac{\mathrm{exp}(\mathbf{z})}{\mathbf{1}^\mathsf{T}\mathrm{exp}(\mathbf{z})}$ as a final transformation on an $m$-dimensional vector~$\mathbf{z}$ of logits, i.e. $\mathbf{p} = \mathsf{softmax}(\mathbf{z})$. Similarly, we can transform a tuple $Z = (Z_1,\ldots,Z_m)$ of $d\times d$ complex-valued matrices into a POVM object $P = (P_1,\ldots,P_m)$. To do so, minimal adaptations need to be made on $\mathsf{softmax}(\cdot)$, which are summarized in the $\mathsf{QuantumSoftmax}(\cdot)$ function described by Algorithm~\ref{alg:quantum-softmax}.

\begin{algorithm}[t]
% \setlength{\algomargin}{-1em}
% \SetAlgoLined
\KwIn{$Z_1,\ldots,Z_m\in\mathbb{C}^{d\times d}$} % TODO: custom kw for inputs and outputs (plural)
\KwOut{$P_1,\ldots,P_m$. }
\For{$j = 1,\ldots,m$}{
    $\tilde{Z}_j = \frac{Z_j + Z_j^\mathsf{H}}{2}$ \tcp*{symmetrize}
    $R_j = \mathrm{expm}(\tilde{Z}_j)$ \tcp*{exponentiate}
}
% \For{$j = 1,\ldots,n$}{
%     {$\tilde{Z}_j = \frac{Z_j + Z_j^\mathsf{H}}{2}$;  $\ \ R_j = \mathrm{expm}(\tilde{Z}_j)$  }
% }
$S = R_1 + \ldots + R_m$\;
\For{$j = 1,\ldots,m$}{
    {$P_j = S^{-\frac{1}{2}}R_jS^{-\frac{1}{2}}$ \tcp*{normalize}}
}
\Return $P_1,\ldots,P_m$\;
\caption{$\mathsf{QuantumSoftmax}$}
\label{alg:quantum-softmax}
\end{algorithm}

The following proposition certifies that~$\mathsf{QuantumSoftmax}$ can be used as a transformation that produces POVMs:

\begin{proposition}
    For any input $Z \in (\mathbb{C}^{d\times d})^m$, Algorithm~\ref{alg:quantum-softmax} terminates and produces an output $P$ in $\mathsf{POVM}_{m,d}$. 
\label{prop:quantum-softmax-algo-valid}
\end{proposition}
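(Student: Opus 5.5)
The plan is to walk through Algorithm~\ref{alg:quantum-softmax} line by line and track, at each step, what kind of matrix we have. Termination is immediate: the algorithm consists of two finite loops of length $m$, each performing finitely many well-defined matrix operations. The only operations that could fail are the matrix exponential and the inverse square root. The series defining $\mathrm{expm}$ converges absolutely for every square matrix, since $\|A^k/k!\|\le \|A\|^k/k!$. So the main thing to check is that $S^{-1/2}$ exists, and this comes out of the structural argument below.

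\textbf{Step 1 (Hermitian).} For each $j$, $\tilde Z_j = (Z_j+Z_j^\mathsf{H})/2$ satisfies $\tilde Z_j^\mathsf{H}=\tilde Z_j$, so it is Hermitian.

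\textbf{Step 2 (positive definite).} By the spectral theorem, write $\tilde Z_j = U_j\Lambda_j U_j^\mathsf{H}$ with $U_j$ unitary and $\Lambda_j$ real diagonal. Substituting into the power series gives $R_j=\mathrm{expm}(\tilde Z_j)=U_j\,\mathrm{diag}(e^{\lambda_{j,k}})\,U_j^\mathsf{H}$. This matrix is Hermitian with strictly positive eigenvalues, hence positive definite.

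\textbf{Step 3 (sum and its root).} A sum of positive definite matrices is positive definite, because $v^\mathsf{H}Sv=\sum_j v^\mathsf{H}R_jv>0$ for every $v\neq0$. So $S$ has a spectral decomposition $S=V\,\mathrm{diag}(s_k)\,V^\mathsf{H}$ with all $s_k>0$. We then define $S^{-1/2}=V\,\mathrm{diag}(s_k^{-1/2})\,V^\mathsf{H}$. This matrix is Hermitian, positive definite, and is the inverse of the unique positive square root of $S$. This settles the only potential obstruction to termination; it is the one step I regard as the crux, and it rests entirely on strict positivity from the exponential.

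\textbf{Step 4 (each $P_j$ is PSD).} For each $j$, $P_j=S^{-1/2}R_jS^{-1/2}$ is a congruence of a PSD matrix by a Hermitian matrix. Indeed, $P_j^\mathsf{H}=P_j$, and
\[
v^\mathsf{H}P_jv=(S^{-1/2}v)^\mathsf{H}R_j(S^{-1/2}v)\ge0 .
\]
In fact $P_j$ is positive definite.

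\textbf{Step 5 (completeness).} By linearity,
\[
\sum_j P_j=S^{-1/2}\Big(\sum_j R_j\Big)S^{-1/2}=S^{-1/2}SS^{-1/2}=I_d ,
\]
since $S^{-1/2}$ commutes with $S$ (they share the eigenbasis $V$). Hence $P=(P_1,\ldots,P_m)\in\mathsf{POVM}_{m,d}$, as claimed.
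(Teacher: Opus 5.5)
Your proposal is correct and follows essentially the same route as the paper's proof: Hermitian symmetrization, strict positivity of each $R_j$ from the eigenvalues of the exponential, positive definiteness of $S$ so that $S^{-1/2}$ is well defined, positivity of each $P_j$ via the substitution $w=S^{-1/2}z$, and the identity $S^{-1/2}SS^{-1/2}=I_d$. Your additional remarks on convergence of the exponential series and on $S^{-1/2}$ commuting with $S$ only make explicit details that the paper leaves implicit.
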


Additionally, $\mathsf{QuantumSoftmax}$ is differentiable, enabling gradient-based optimization when used as a transformation. Further, any POVM can be recovered, either precisely or approximately (up to arbitrary precision), as the output of $\mathsf{QuantumSoftmax}$ for some suitable input:

\begin{proposition}
$(X,Y) \mapsto \mathsf{QuantumSoftmax}(X + \mathsf{i}Y)$ is differentiable over the real and imaginary parts, and its image is dense in~$\mathsf{POVM}_{m,d}$.
\label{prop:quantum-softmax-surjective}
\end{proposition}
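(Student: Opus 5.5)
The proof has two parts: differentiability, and density of the image.

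\emph{Differentiability.} I would write $\mathsf{QuantumSoftmax}(X+\mathsf{i}Y)$ as a composition of maps that are real-analytic in the real coordinates $(X,Y)$, and apply the chain rule.
\begin{itemize}
\item The Hermitian part $\tilde Z_j = \tfrac12(Z_j+Z_j^{\mathsf H})$ is real-linear in $(X_j,Y_j)$.
\item $\mathrm{expm}$ is given by an everywhere convergent power series, so it is real-analytic.
\item The sum $S=\sum_j R_j$ is linear.
\item As in the proof of Proposition~\ref{prop:quantum-softmax-algo-valid}, each $R_j$ is positive definite. Hence $S$ stays in the open cone of positive definite matrices.
\item On that cone, $S\mapsto S^{-1/2}$ is real-analytic. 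One way to see this is the holomorphic functional calculus $S^{-1/2}=\frac{1}{2\pi\mathsf i}\oint z^{-1/2}(zI-S)^{-1}\,dz$, with a contour in the right half-plane enclosing the spectrum. Alternatively, apply the inverse function theorem to $T\mapsto T^2$: its differential is the Sylvester map $H\mapsto TH+HT$, which is invertible for $T\succ0$.
\item Matrix products are polynomial.
\end{itemize}

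\emph{Density.} Take an arbitrary POVM $M=(M_1,\dots,M_m)$ and fix $\varepsilon\in(0,1)$. I would first move into the interior by setting
\[
M_j^\varepsilon=(1-\varepsilon)M_j+\tfrac{\varepsilon}{m}I .
\]
This is again a POVM, each $M_j^\varepsilon$ is positive definite, and $M^\varepsilon\to M$ as $\varepsilon\to0$. It therefore suffices to show that every POVM whose elements are all positive definite lies exactly in the image.

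For such a POVM, put $Z_j=\log M_j^\varepsilon$, the Hermitian matrix logarithm, which exists because $M_j^\varepsilon\succ0$. In the algorithm this gives:
\begin{itemize}
\item $\tilde Z_j=Z_j$, since $Z_j$ is already Hermitian;
\item $R_j=\mathrm{expm}(Z_j)=M_j^\varepsilon$;
\item $S=\sum_j M_j^\varepsilon=I$;
\item $P_j=S^{-1/2}R_jS^{-1/2}=M_j^\varepsilon$.
\end{itemize}
So the image contains every POVM with positive definite elements. Since these are dense in $\mathsf{POVM}_{m,d}$, the image is dense. Taking $X=\mathrm{Re}\,Z_j$ and $Y=\mathrm{Im}\,Z_j$ gives the real parameterization in the statement.

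\emph{Expected obstacle.} The only delicate step is the real-analyticity of $S^{-1/2}$, and it hinges on knowing that $S$ stays positive definite. That fact comes from Proposition~\ref{prop:quantum-softmax-algo-valid}, or directly from $\mathrm{expm}$ of a Hermitian matrix being positive definite.

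It is worth noting in the write-up that density cannot be improved to surjectivity. Outputs are always of the form $S^{-1/2}R_jS^{-1/2}$ with $R_j\succ0$, hence full rank. POVMs with singular elements, such as projective measurements, are therefore only reached as limits.
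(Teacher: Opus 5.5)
Your proposal is correct and uses the same idea as the paper, which (in the appendix on numerical stability) recovers a POVM by feeding in the logits $Z_j=\log P_j$, so that $S=I$ and the algorithm returns $P_j$; the paper gives no separate argument for differentiability, so your chain-rule argument fills in what it only asserts. Your shift $(1-\varepsilon)M_j+\tfrac{\varepsilon}{m}I$ into the positive-definite interior is a genuine improvement: the paper's remark silently assumes $\log P_j$ exists, which fails for singular elements such as the projective measurements in the CHSH strategy, and this is exactly why the result is density rather than surjectivity.
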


Practical details about numerical stability and (automatic) differentiation (note the inverse matrix square root) can be found in Appendix~\ref{sec:differentiability-quantum-softmax}. %These claims are summarized in the following proposition:
In Figure~\ref{fig:quantum-policy}, we put everything together to obtain an end-to-end parameterization of a policy in~$\boldsymbol{\Pi}_\mathsf{Q}$.

%%%%%%%%%%%%%%%%%%%%%%%% Learning  %%%%%%%%%%%%%%%%%%%%%%%%
\subsection{Learning Quantum Entangled Strategies}
\label{sec:learning}
As communication constraints are constraints on the joint policy space, we focus on policy gradient methods, where constraints on the policy can be implemented directly.
To introduce how policy gradient methods can learn strategies that leverage quantum entanglement, we first describe a simple REINFORCE method \cite{williams1992simple} for learning strategies for nonlocal games.
We then describe a more sophisticated modified multi-agent PPO (MAPPO) \cite{yu2022surprisingeffectivenessppocooperative} method capable of finding truly sequential decision-making policies leveraging quantum entanglement.
We demonstrate these methods in the experiments of Section \ref{sec:experiments}.

\subsubsection{Policy Gradient for Nonlocal Games}
% \subsubsection{Warmup: Nonlocal Games}
\label{sec:nonlocal_warmup}
Nonlocal games are single-round, cooperative, multi-player games that allow players to determine a strategy before the start of the game but then enforce a no-communication constraint as the game is played \cite{1313847}. In each round of the game, a referee chooses questions for the players at random and sends each player their respective question. 
The players then observe their question and send an answer, after which the referee determines whether the players win or lose the game based on their collective questions and answers. As a warmup to sequential decision-making problems, we first learn strategies for nonlocal games from experience, where the players aim to find a strategy that maximizes the win probability of the nonlocal game.

In addition to maximizing the win probability of the nonlocal game, we consider adding entropy regularization, similar to \cite{williams1991function, mnih2016asynchronous, schulman2017equivalence, schulman2017proximal, ahmed2019understanding}.
For nonlocal games, there always exists an optimal classical strategy that is deterministic \cite{1313847} and therefore has zero entropy. 
Empirically, we observe that some runs converge to or linger at strategies with the classically optimal win probability. 
We hypothesize that encouraging policies with non-zero entropy helps training avoid 
% converging to these 
classically optimal deterministic strategies, where it may otherwise get stuck, and thereby discover strategies exhibiting quantum advantage. Indeed, the presence of an entropy regularization term improves the results for all nonlocal games studied in this paper, as illustrated by Table \ref{tab:nonlocal_results} in Appendix~\ref{appendix:experiment_details}. Details on nonlocal games, our model architecture and the training procedure are provided in Appendix~\ref{appendix:nonlocal_games}.

\subsubsection{Multi-Agent PPO for Sequential Decision-Making}\label{sec:modified_ppo}
Nonlocal games as usually understood are single-round games where actions do not affect future rounds. 
Optimal strategies are more difficult to find for Dec-POMDPs where sequential decision-making between non-communicating agents is required. 
This is the domain of reinforcement learning algorithms.

Designing reinforcement learning algorithms that find policies satisfying the various no-communication constraints outlined in Section \ref{sec:policy_spaces} requires some thought.
As these constraints are restrictions on the form the joint policy can take it is convenient to focus on policy gradient methods, where the policy is directly parametrized and we can choose a parametrization that itself enforces the constraint.

Many multi-agent RL algorithms (MAPPO, for example) train the individual policies rather than the joint policy.
For us the no-communication constraints at the level of the joint policy couple the individual policies of the agents, such that training individual policies may seem unsuitable at first.
We make use of the fact that any joint policy can be written as in~\eqref{eq:coordinated-policy-parameterized}, $\pi_\theta(\mathbf{a}|\mathbf{h})=\sum_\mathbf{x} q_\theta(\mathbf{x}|\mathbf{h})\prod_{i=1}^n \pi_{i,\theta}(a_i|x_i,h_i)$, with different communication constraints now encoded in the form that the distribution $q_\theta(\mathbf{x}|\mathbf{h})$ takes.
This ansatz separates the joint policy into actor models for individual policies $\pi_i(a_i|x_i,h_i)$ and a ``coordinator" model for the distribution $q(\mathbf{x}|\mathbf{h})$.
This allows us to use some of the machinery of algorithms that train individual policies $\pi_i$ as long we are able to make two modifications: we include an additional model for $q_\theta(\mathbf{x}|\mathbf{h})$, and we allow the individual policies $\pi_i$ to depend on an additional input $x_i$ sampled from $q_\theta(\mathbf{x}|\mathbf{h})$.
This additional input $x_i$, which we call the advice, can be treated similarly to an observation of agent $i$ and considered part of the trajectory together with states, observations, and actions.

As a concrete example we implement a MAPPO \cite{yu2022surprisingeffectivenessppocooperative} algorithm modified to include a coordinator model in addition to the usual actor and critic models. 
The surrogate objective function of PPO \cite{schulman2017proximal} is chosen so that its gradient, before clipping, is the policy gradient estimate
$ % \begin{equation}
\nabla_\theta J(\theta)
=
\mathbb{E}_{\tau}\left[
    \sum_t
    \nabla_{\theta}\log\pi_{\theta}(a_t|h_t)
    A_t
\right],
$ % \end{equation}
where $A_t$ is (an estimate of) the advantage.
Replacing the policy in this policy gradient estimate with the form $\pi(\mathbf{a}_t|\mathbf{h}_t)=\sum_{\mathbf{x}_t}q(\mathbf{x}_t|\mathbf{h_t})\prod_i\pi_i(a_{it}|x_{it},h_{it})$ suggests a surrogate objective with two terms: a term for the actor models for the policies $\pi_{i}(a_{i}|x_i,h_{i})$,
\begin{equation}
    \label{eq:actor_surrogate}
    L_\text{act}(\theta)
    =
    \mathbb{E}_\tau\bigg[\sum_t\sum_i
    \frac{
        \pi_{i\theta}(a_{it}|x_{it},h_{it})
    }{
        \pi_{i\theta_0}(a_{it}|x_{it},h_{it})
    }
    A_t
    \bigg],
\end{equation}
and a term for the coordinator model
\begin{equation}
    \label{eq:coord_surrogate}
    L_\text{coord}(\theta)
    =
    \mathbb{E}_\tau\bigg[\sum_t
    \frac{
        q_\theta(\mathbf{x}_t|\mathbf{h}_t)
    }{
        q_{\theta_0}(\mathbf{x}_t|\mathbf{h}_t)
    }
    A_t
    \bigg].
\end{equation}
The expectations here are over trajectories that are understood to now include sampled values of $\mathbf{x}_t$ in addition to sampled states, observations, and actions.
See Appendix \ref{appendix:surrogate_objective_function}
These terms can each then be clipped as in PPO.

This together with an unmodified centralized critic constitute our modified MAPPO algorithm.
With the addition of the coordinator model and corresponding coordinator objective term, the algorithm will find solutions that respect whatever constraints are encoded in the form that $q(\mathbf{x}|\mathbf{h})$ takes.
For example, to learn policies that utilize shared quantum entanglement for each joint action, we can model $q(\mathbf{x}|\mathbf{h})$ to have the form $q_\theta(\mathbf{x}|\mathbf{h})=\trace\left(\rho_\theta\bigotimes_{i=1}^n M_{i,\theta}(x_i|h_i)\right)$, with the POVMs being parametrized by the $\mathsf{QuantumSoftmax}$ layer described in Section \ref{sec:povm_softmax}, for example.
To learn joint policies where only shared randomness is shared between the agents, we could instead model $q(\mathbf{x}|\mathbf{h})$ to have the simpler form $q(\mathbf{x}|\mathbf{h})=q(\t{\mathbf{x}})\delta_{\mathbf{x},\t{\mathbf{x}}}$ where the random variables~$\{x_i\}_{i\in [n]}$ are equal and do not depend on the history~$\mathbf{h}$.

When deploying a policy learned by this method, the advice $\mathbf{x}$ is first sampled from the coordinator distribution $q(\mathbf{x}|\mathbf{h})$ and then each agent samples its action $a_i$ from its individual advised policy $\pi_i(a_i|x_i,h_i)$
As explained in Section \ref{sec:policy_spaces} when the form of $q(\mathbf{x}|\mathbf{h})$ is chosen to reflect a no-communication constraint, sampling from it is possible, with the appropriate resources, in a decentralized fashion.

%%%%%%%%%%%%%%%%%%%%%%%%%%%%%%%%%%%%%%%%%%%%%%%%%%%%%%%%%%%%%
%%%%%%%%%%%%%%%%%%%%%%%% Experiments %%%%%%%%%%%%%%%%%%%%%%%%
%%%%%%%%%%%%%%%%%%%%%%%%%%%%%%%%%%%%%%%%%%%%%%%%%%%%%%%%%%%%%
\section{Experiments}
\label{sec:experiments}

\subsection{Nonlocal Games}
We first evaluate the performance of our framework on four nonlocal games where strategies with quantum advantage have been discovered previously in the literature. Specifically, we learn strategies for the CHSH game \cite{original_chsh, 1313847}, the GHZ game \cite{vaidman1999variations} and two forms of the rendezvous game \cite{PhysRevA.109.042201}. The details of these games can be found in Section \ref{appendix:nonlocal_games} of the Appendix. We visualize the results for the CHSH game in Figure \ref{fig:chsh_performance}, which plots the win probabilities of our learned strategies with and without entropy regularization during training. As hypothesized, we see that entropy regularization does indeed help find quantum advantage, as all of the runs with entropy regularization find quantum advantage, whereas 3 of the runs without entropy regularization get stuck at a strategy with a win probability that is classically optimal. The benefits of entropy regularization are observed for all of the nonlocal games, as every run with entropy regularization finds a strategy with quantum advantage, whereas there is always at least one run without entropy regularization that cannot surpass the classically optimal win probability. Results for the other nonlocal games are detailed in Appendix~\ref{appendix:experiment_details}.

\begin{figure}[t]
\centering
\includegraphics[width=0.5\linewidth, trim ={0.5cm 0cm 0cm 0cm}]{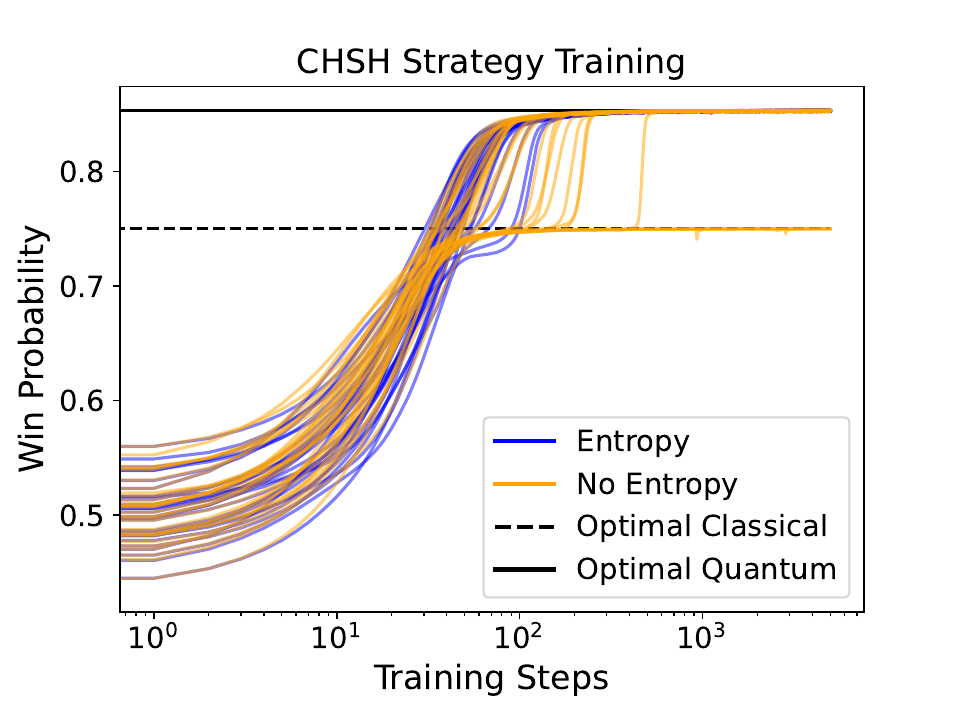}
    \caption{Win probabilities of learned strategies for the CHSH game during training with and without entropy regularization.}
    \label{fig:chsh_performance}
\end{figure}

\subsection{Multi-Agent Sequential Decision-Making}
In this section, we illustrate our experiments and results when using our proposed framework to learn strategies from experience in a sequential decision making problem analyzed in previous works which provably admits quantum advantage. 

\subsubsection{Multi-Router Multi-Server Queueing}
We consider a decentralized routing problem introduced by~\citet{da2025entanglement, dasilva2026entanglement}. 
In this problem, two decentralized decision-makers (routers) route customers servers based only on the service time of a customer's request. 
When routed to a server, a customer joins the server's queue, and the server provides service to customers for as long as the queue is nonempty.
An idle server (one with empty queue) works on a ``baseline task," with efficiency that increases with the uninterrupted idle time. 
The common goal of the routers is to maximize the throughput on this baseline task while guaranteeing a bounded average waiting time for the customers. 
The key decision-making challenge for a router lies in not knowing the service times for customers arriving at the other router. 
The setup is illustrated in Fig.~\ref{fig:server_router_fig}. 
The authors~\citet{dasilva2026entanglement} analyzed this problem by modeling inter-arrival and service times as exponential random variables and showed, using results from queueing theory, that entanglement shared between the routers can provide quantum advantage in the decentralized decision-making.
\begin{figure}[t]
    \centering
    \includegraphics[width=0.6\linewidth]{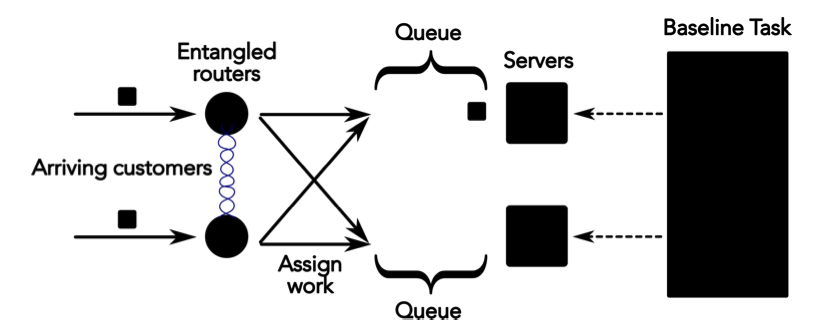}
    \caption{The multi-router queueing problem}
    \label{fig:server_router_fig}
\end{figure}

To demonstrate our MARL framework, we model this task as a Dec-POMDP and learn suitable policies from experience, eschewing the analytical queueing theory results that underlie the discovery of quantum advantage by~\citet{dasilva2026entanglement}.
See Appendix~\ref{appendix:server-router} for a complete description of the problem as a MARL environment.

The agents have the dual objective of increasing throughput on the baseline task and reducing customer wait time.
We treat the dual objective by using the throughput as the RL reward and use the PPO-Lagrangian \cite{Achiam2019BenchmarkingSE} framework with \mbox{PID-controlled} Lagrange multiplier \cite{stooke2020responsivesafetyreinforcementlearning} to enforce an upper bound on the wait time.

\begin{figure}[!ht]
\centering
\includegraphics[width=0.5\linewidth, trim ={0.5cm 0cm 0cm 0cm}]{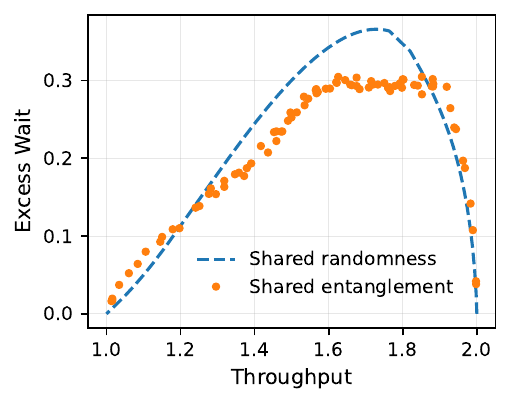}
    \caption{
        Wait time in excess of the optimal wait time obtainable when communication is allowed. 
        Lower is better.
        Learned strategies with shared quantum entanglement (orange dots) outperform the theoretical best for non-entangled strategies (blue line) for most values of the throughput.
    }
    \label{fig:server_router_perf}
\end{figure}

Figure~\ref{fig:server_router_perf} shows the performance of our algorithm.
The results are in terms of the difference between the obtained customer wait time and the theoretically optimal wait time with communication allowed, i.e.\ we plot the excess wait time directly attributable to the no-communication constraint.
For most values of the throughput, the models learn shared entanglement strategies (orange dots) that provide lower wait time than the known optimal wait times for shared randomness strategies (blue line).
The presence of shared quantum entanglement ameliorates the effect of the communication constraint.
See Appendix \ref{appendix:server-router} for details on the experiment.

By these results, we confirm the finding of \citet{dasilva2026entanglement} by alternate means.
Since our approach does not rely on queueing theory results particular to the problem, our findings can in principle be extended to other, more general, examples.

%%%%%%%%%%%%%%%%%%%%%%%%%%%%%%%%%%%%%%%%%%%%%%%%%%%%%%%%%%%%
%%%%%%%%%%%%%%%%%%%%%%%% Conclusion %%%%%%%%%%%%%%%%%%%%%%%%
%%%%%%%%%%%%%%%%%%%%%%%%%%%%%%%%%%%%%%%%%%%%%%%%%%%%%%%%%%%%
% \section{Discussion and Conclusion}
\section{Conclusions and Future Work}
\label{sec:conclusion}

In this work, we introduced a MARL framework for learning joint policies that leverage quantum entanglement as a coordination resource.
We describe a parametrization of entangled policies via learnable POVMs ($\mathsf{QuantumSoftmax}$) and demonstrate how to incorporate this parametrization into policy gradient methods by training both local actor models and a coordinator model with an appropriately modified MAPPO objective.
Empirically, this machinery discovers strategies exhibiting quantum advantage in communication-free settings, including in an illustrative Dec-POMDP.

Several directions for future work present themselves:
% This suggests several directions for further research:
\begin{itemize}
    \item  First and most pressing is the need to characterize, theoretically or empirically, settings where entanglement can improve decentralized sequential decision-making.
    \item Second, our framework limits shared entanglement to use by agents at the same time step, 
    only allowing entanglement to directly coordinate actions within a time step.
    This is not the most general way agents could consume entanglement. 
    To our knowledge, there is no work on e.g.\ multi-round games where players share entanglement \emph{between} rounds, but believe that generalization along this line is crucial for realistic applications, as decentralized agents are likely to encounter the need for \emph{asynchronous coordination}. 
    %How generic is quantum advantage in complex MARL environments? % When can quantum advantage be ruled out?
    \item Third, there is a need to incorporate realistic constraints, such as imperfect state preparation, restricted measurement sets, or small dimensionality, directly into the learning problem.
    See Appendix \ref{appendix:physical_implementation}.
    \item Finally, it is known in classical multi-agent systems that shared randomness can yield more efficient finite-state controller solutions \cite{Amato2016}.
    It is natural to ask whether shared entanglement can reduce the size of finite-state controllers further.
\end{itemize}

This work was inspired by the recent works of \citet{ding2024coordinating} and \citet{ding2025quantumnonlocalitylatencyconstraints}, which call attention to the communication barrier presented by latency in high-frequency trading and suggest the in-principle usefulness of shared quantum entanglement in coordinating trades.
This potential for commercial usefulness is in stark contrast with the erstwhile reputation of Bell inequality violation as a phenomenon of conceptual or perhaps philosophical importance.
This shift in thinking tracks progress in the quantum technology that makes distributing entanglement possible.
As the technology matures, the questions become more pressing: how generic is quantum advantage, and how discoverable is that quantum advantage? 
Multi-agent reinforcement learning presents itself as a useful tool for addressing these questions, with this work making a crucial step in this direction.

\subsubsection*{Acknowledgments}
We thank Douglas Hamilton for his support throughout the project and Sima Noorani for contributing to discussions and some early code.
We also thank Francisco Ferreira da Silva for crucial discussions related to the multi-router queueing problem and feedback on the draft.

\bibliography{bibliography_ICML}

\appendix
\onecolumn
\subsubsection*{Appendix}

\section{Bell Inequality Violation}
\label{appendix:bell_inequality_violation}
Consider two random variables $A$ and $B$ with no causation between them. 
All causes of $A$ and $B$ can be divided into shared causes and unshared causes.
Denote the shared causes of $A$ and $B$ by $\Lambda$, denote the causes of $A$ unshared with $B$ by $X$, and denote the causes of $B$ unshared with $A$ by $Y$.
The situation is described by the causal diagram in figure \ref{fig:causal_diagram}.
\begin{figure}[!ht]
\begin{center}
\includegraphics[width=0.35\linewidth]{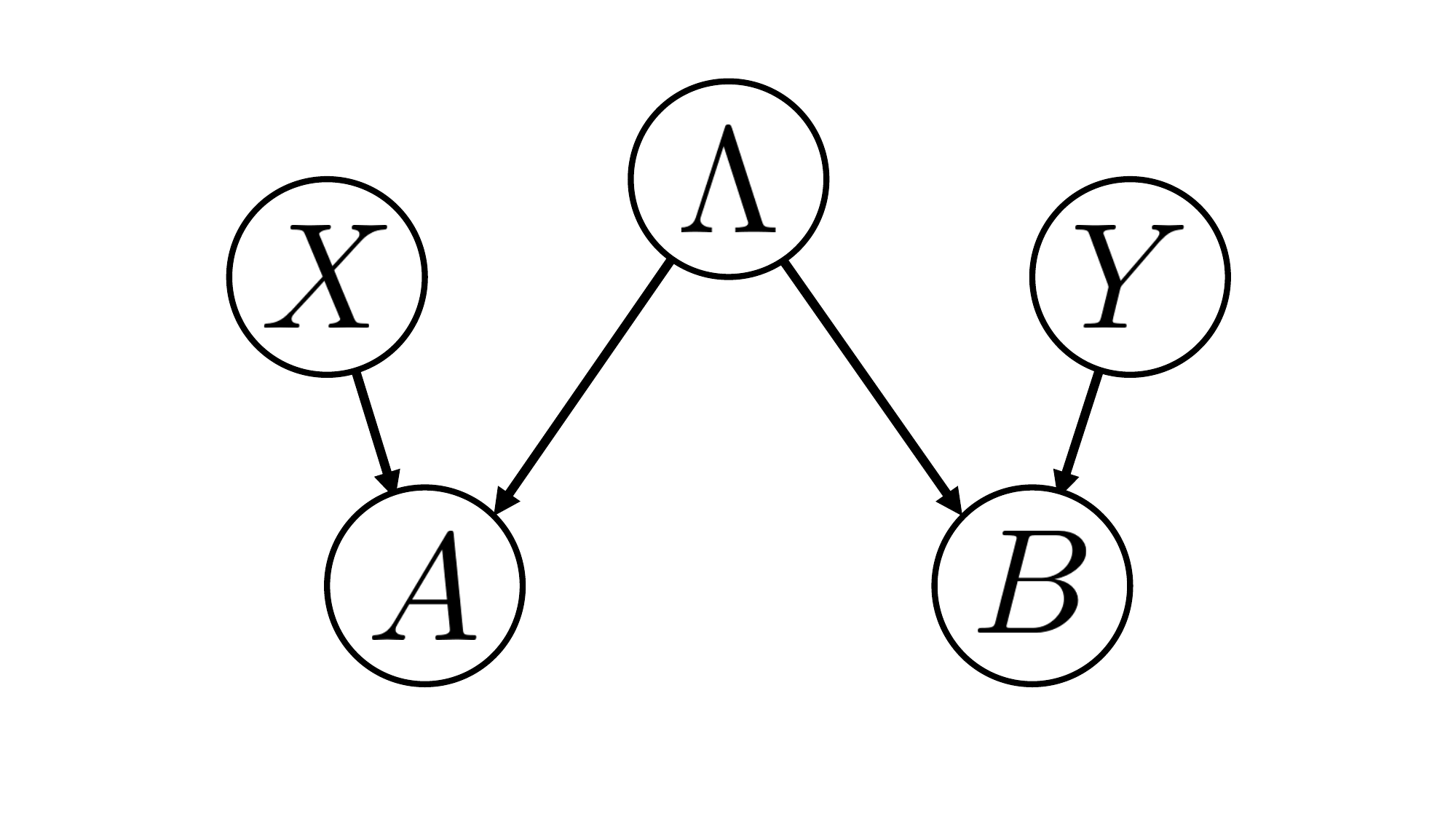}
\end{center}
\caption{Causal relationships between $A$, $B$, $X$, $Y$, and $\Lambda$}
\label{fig:causal_diagram}
\end{figure}
In general, it is tempting to understand $A$, $B$, $X$, $Y$, and $\Lambda$ as random variables and interpret the causal diagram as a Bayesian network describing the form their joint probability distribution must take, in this case $p(a,b,x,y)=p(x)p(y)p(\lambda)p(a|x,\lambda)p(b|y,\lambda)$.
In particular, the joint distribution over $A$ and $B$ conditional on $X$ and $Y$ would be
\begin{equation}\label{eq:hidden_variable}
    p(a, b|x, y)
    =
    \sum_\lambda p(\lambda)p(a|x,\lambda)(b|y,\lambda).
\end{equation}
Again, these forms would follow from the conditional independence implied by the causal relations between $A$, $B$, $X$, $Y$, and $\Lambda$, if we were to consider them as random variables.
This  would be a natural assumption to make in nearly all real-world contexts typically encountered. 
The assumption can fail, fascinatingly, when A and B have entangled quantum states in their shared past.
This situation has no description in terms of some shared random variable $\Lambda$.
Suppose Alice and Bob have respective quantum systems that are in a joint state $\rho$.
Further suppose Alice and Bob both make measurements on their respective quantum systems, and that Alice's choice of measurement is $x$ sampled from a random variable $X$, and Bob's choice of measurement is $y$ sampled from a random variable $Y$.
Let $A$ be the random variable describing the outcome of Alice's measurement and $B$ be the random variable describing the outcome of Bob's measurement.
Then the probability of joint outcome $(a, b)$, conditional on $X$ and $Y$, is given by the Born rule as
\begin{equation}\label{eq:born_rule_bell_inequality}
    p(a, b|x, y)
    =
    \trace\big(\rho\,M(a|x)\!\otimes\! N(b|y)\big),
\end{equation}
where $M(\,\cdot\,|x)$ is the measurement corresponding to Alice's choice $x$, and $N(\,\cdot\,|y)$ is the measurement corresponding to Bob's choice $y$.
There are states $\rho$ and appropriate families of measurements $\left\{M(\,\cdot\,|x)\right\}_x$ and $\left\{N(\,\cdot\,|y)\right\}_y$ such that the RHS of \eqref{eq:born_rule_bell_inequality} cannot be written in the form $\sum_\lambda p(\lambda)p(a|x,\lambda)(b|y,\lambda)$ for \emph{any} choice of $p(\lambda)$, $p(a|x,\lambda)$, and $p(b|y,\lambda)$.
In fact the RHS of \eqref{eq:born_rule_bell_inequality} is a strictly more general form.
The failure of the RHS of \eqref{eq:born_rule_bell_inequality} to take the form \eqref{eq:hidden_variable} is known to physicists as Bell inequality violation \cite{RevModPhys.86.419}.
The simplest explicit example of a state and measurements where this happens is the CHSH game, which we discuss in Appendix \ref{appendix:chsh}.

Alice and Bob can choose their measurements randomly and independently, without any communication between them, and the state $\rho$ can be prepared and shared beforehand.
Then the shared causes of random variables $A$ and $B$ are entirely captured in the state $\rho$.
As we've seen it is not always possible to represent these shared causes as a random variable and make the usual assumptions one would expect from the causal structure.
This is an example where \emph{physical} causality does not have the same implications that causality between random variables is usually understood to have.
The usual, intuitive ``rules" connecting causation and the resulting form of joint distributions are only approximations contingent on the macroscopic nature of the quantities represented by the variables. 
Though unintuitive, this fact has been experimentally verified repeatedly since it was first described theoretically by John Stewart Bell in 1964 \cite{bell1964einstein}.
Some of the work done to experimentally verify this fact was the subject of the 2022 Nobel Prize in Physics awarded to physicists Alain Aspect, John Clauser, and Anton Zeilinger \cite{NobelPrizePhysics2022}.

This feature of quantum entanglement has practical significance.
Reinterpret $X$ and $Y$ as the observations of two agents, $A$ and $B$ as their actions, and $\Lambda$ as their shared history.
Absent shared entanglement, the only possible policies the agents can execute, given no communication between them, will be of the form
\begin{equation}\label{eq:hidden_variable_policy}
    \pi(a,b|x,y)
    =
    \sum_\lambda p(\lambda)p(a|x,\lambda)(b|y,\lambda).
\end{equation}
With a shared entangled state, however the policy can take the form
\begin{equation}\label{eq:born_rule_policy}
    \pi(a, b|x, y)
    =
    \trace\left(\rho\,M(a|x)\!\otimes\! N(b|y)\right),
\end{equation}
which is strictly more general.
More generally, actions $a_i$ of non-communicating agents based on unshared observations $o_i$ need not be restricted to being drawn from policies of the form $\pi(a_1\cdots a_n|o_1\cdots o_n)=\sum_\lambda p(\lambda)\pi_1(a_1|o_1,\lambda)\cdots\pi_n(a_n|o_n,\lambda_n)$.
A larger space of policies is possible when quantum entanglement is shared between the agents.

We present here a simple geometric description of the spaces of shared randomness and shared entanglement policies defined by the forms in \eqref{eq:quantum-policy} and \eqref{eq:shared_randomness_policies} respectively.
For the sake of the following discussion, assume that the joint agent history $\mathbf{h}$ takes value in a finite set $\mathcal{H}$, and the joint action $\mathbf{a}$ takes value in a finite set $\mathcal{A}$.
Joint policies $\pi(\mathbf{a}|\mathbf{h})$ are then points in $\mathbb{R}^{|\mathcal{A}||\mathcal{H}|}$.
Consider the factorizable, deterministic joint policies, i.e.\ those of the form $\pi(\mathbf{a}|\mathbf{h})=\prod_i\delta_{a_i,f_i(h_i)}$.
It is not hard to see that the space of shared randomness policies is the convex hull of these factorizable, deterministic policies, and is thus a polytope in $\mathbb{R}^{|\mathcal{A}||\mathcal{H}|}$.
The space of policies allowed by quantum entanglement contains the space of shared randomness policies.
It is also convex, though not a polytope.
Facets of the polytope of shared randomness policies are described by linear inequalities, called by physicists Bell inequalities.
A quantum entangled policy that cannot be written as the RHS of \eqref{eq:shared_randomness_policies} is said to exhibit ``Bell inequality violation."
Again, see Appendix \ref{appendix:chsh} for an explicit example of a policy using quantum entanglement that cannot be implemented with merely shared randomness.

\section{Dec-POMPDs in more detail}\label{sec:decPOMDPs}
In \emph{fully cooperative} MARL, a power formalism is the \emph{decentralized partially observable Markov decision process} (Dec-POMDP)~\cite{Amato2016}, given by a tuple $\mathcal{M} = \left(\mathcal{S}, \{\mathcal{A}_i\}_{i\in [n]}, \{\mathcal{O}_i\}_{i\in [n]}, R, P, O, \gamma\right)$. At each time step~$t$, the environment is at a state $s_t\in\mathcal{S}$ and, simultaneously, each of agent $i\in [n]$ observes $o_{i,t} \in \mathcal{O}_i$ and immediately takes an action $a_{i,t} \in \mathcal{A}_i$. Subsequently, the environment transitions into a new state $s_{t+1}$ at the next time step, according to the transition probability function $P:\mathcal{S}\times\boldsymbol{\mathcal{A}}\to\Delta(\mathcal{S})$, with $P:(s,\mathbf{a})\mapsto P(\cdot|s,\mathbf{a})$. More precisely, $s_{t+1} \sim P(\cdot|s_t,\mathbf{a}_t)$, where $\mathbf{a}_t \in \boldsymbol{\mathcal{A}}$, with $\mathbf{a}_t = (a_{i,t})_{i\in [n]}$ and $\boldsymbol{\mathcal{A}} = \mathcal{A}_1\times\ldots\times\mathcal{A}_n$. Upon a state transition, the environment will produce a reward $r_t$ that is shared across all agents, according to the reward function $R:\mathcal{S}\times\boldsymbol{\mathcal{A}}\times\mathcal{S} \to \mathbb{R}$. More precisely, $r_t = R(s_t,\mathbf{a}_t,s_{t+1})$. The joint observation $\mathbf{o}_t \in \boldsymbol{\mathcal{O}}$, with $\mathbf{o}_t = (o_{i,t})_{i\in [n]}$ and $\boldsymbol{\mathcal{O}} = \mathcal{O}_1 \times \ldots \times \mathcal{O}_n$, depends on the current state $s_t$ as well as the joint action $\mathbf{a}_{t-1}$ that led to the current state, according to the observation function $O:\mathcal{S}\times\boldsymbol{\mathcal{A}} \to \Delta(\boldsymbol{\mathcal{O}})$. More precisely, $\mathbf{o}_t \sim O(\cdot|s_t,\mathbf{a}_{t-1})$.

A \emph{(joint) policy} is a function $\pi:\bigcup_{t=0}^\infty\boldsymbol{\mathcal{H}}_t \to \Delta(\boldsymbol{\mathcal{A}})$ with $\pi:\mathbf{h} \mapsto \pi(\cdot|\mathbf{h})$, from which the agents jointly sample and choose their actions based on their observation-action histories. More precisely, $\mathbf{a}_t \sim \pi(\cdot|\mathbf{h}_t)$, where $\mathbf{h}_t \in \boldsymbol{\mathcal{H}}_t$ and $\mathbf{h}_t = (h_{i,t})_{i\in[n]}$, with $h_{i,t}$ denoting the observation-action history of agent $i$ up to time step $t$. For a given joint policy~$\boldsymbol{\pi}$ and a given initial state $s\in\mathcal{S}$, the value function
\begin{equation}
    V^{\boldsymbol{\pi}}(s) := \mathbb{E}_{\boldsymbol{\pi}}\left[\sum_{t=0}^\infty \gamma^t\,R(s_t,\mathbf{a}_t,s_{t+1}) \,|s_0 = s\, \right]
    \label{eq:value-function}
\end{equation}
quantifies the expected total discounted reward, where \mbox{$0\leq \gamma < 1$} denotes the \emph{discount factor}. Given a class $\boldsymbol{\Pi}$ of joint policies, \emph{policy optimization} is the central task we will consider, which consists of approximately finding
\begin{equation}
    \boldsymbol{\pi}^\star \in \argmax_{\boldsymbol{\pi}\in\boldsymbol{\Pi}} V^{\boldsymbol{\pi}}(\mu),
    \label{eq:policy-optimization}
\end{equation}
where $V^{\boldsymbol{\pi}}(\mu) := \mathbb{E}_{s\sim\mu}[V^{\boldsymbol{\pi}}(s)]$, with $\mu \in \Delta(\mathcal{S})$ denoting some initial state distribution.

\section{Proofs}
\label{appendix:proofs}
\subsection{Proof of Proposition \ref{prop:quantum-softmax-algo-valid}}
\begin{proof}
Let $Z \in(\mathbb{C}^{d\times d})^n$ be an arbitrary input to Algorithm~\ref{alg:quantum-softmax} and let $P$ denote the tentative output. By definition, we have $P_j = S^{-\frac{1}{2}}R_j S^{-\frac{1}{2}}$ with $S = \sum_{j=1}^n R_j$ and $R_j = \mathrm{exp}\left(\frac{Z_j + Z_j^\mathsf{H}}{2}\right)$. Now, note that $\tilde{Z}_j := \frac{Z_j + Z_j^\mathsf{H}}{2}$ is Hermitian, and thus $R_j = \mathrm{exp}(\tilde{Z}_j)$ is also Hermitian. Further, since $\tilde{Z}_j$ is Hermitian, it must have real eigenvalues, say $\lambda_1(\tilde{Z}_j),\ldots,\lambda_d(\tilde{Z}_j) \in \mathbb{R}$. Subsequently, $R_j = \mathrm{exp}(\tilde{Z}_j)$ must have eigenvalues $\mathrm{exp}(\lambda_1(\tilde{Z}_j)),\ldots,\mathrm{exp}(\lambda_d(\tilde{Z}_j))$, which are clearly real and strictly positive. Therefore, $R_1,\ldots,R_n \succ 0$. Clearly then, $S = R_1+\ldots+R_n\succ 0$, which makes $S^{-\frac{1}{2}}$ well defined.

All that remains is to show now is that $P_1,\ldots,P_n \succeq 0$, with $P_1 + \ldots + P_n = I_d$. In fact, we will show a stronger claim, namely that $P_1,\ldots,P_n \succ 0$, in addition to $P_1 + \ldots + P_n = I_d$. First, note that, given any $z\in\mathbb{C}^d\setminus\{0\}$ and $j\in\{1,\ldots,n\}$, we have
\begin{align*}
    z^\mathsf{H} P_j z &= z^\mathsf{H}\big(S^{-\frac{1}{2}} R_j S^{-\frac{1}{2}}\big) z \\
    &= \big(S^{-\frac{1}{2}}z\big)^\mathsf{H} R_j \big(S^{-\frac{1}{2}}z\big) \tag{$S = S^\mathsf{H}$} \\
    &= w^\mathsf{H} R_j w \tag{$w = S^{-\frac{1}{2}}z$} \\
    &>0,
\end{align*}
where the inequality can be justified as follows: \emph{(i)} $R_j \succ 0$ as previously established, and \emph{(ii)} $S \succ 0$ as previously established, which ensures that $S^{-\frac{1}{2}}$ is positive definite and thus invertible, and therefore $w \neq 0$. Since $z$ and $j$ were arbitrary, we conclude that $P_1,\ldots,P_n \succ 0$.

Lastly,
\begin{align*}
    \sum_{j=1}^n P_j &= \sum_{j=1}^nS^{-\frac{1}{2}} R_j S^{-\frac{1}{2}} \\
    &= S^{-\frac{1}{2}} \left(\sum_{j=1}^nR_j\right) S^{-\frac{1}{2}} \\
    &= S^{-\frac{1}{2}} S S^{-\frac{1}{2}} \\
    &= I_d,
\end{align*}
which completes the proof.
\end{proof}

\section{Nonlocal Games}
\label{appendix:nonlocal_games}
\subsection{Formal Definition}
For a nonlocal game of $n$ players, let $\boldsymbol{\mathcal{O}} = \mathcal{O}_1\times\ldots\times\mathcal{O}_n$ for finite, non-empty $\mathcal{O}_i$ define the joint observation space corresponding to the possible combinations of questions for all of the players. 
In addition, let $\boldsymbol{\mathcal{A}} = \mathcal{A}_1\times\ldots\times\mathcal{A}_n$ for finite, nonempty $\mathcal{A}_i$ define the joint action space corresponding to the possible combinations of answers for all of the players. 
The nonlocal game is then defined as $G = (V, \mu)$ where $V$ denotes a predicate function where $V(\mathbf{a} | \mathbf{o})$ is 1 if the response $\mathbf{a} \in \boldsymbol{\mathcal{A}}$ is correct given the question $\mathbf{o} \in \boldsymbol{\mathcal{O}}$ and 0 otherwise, and $\mu$ denotes the distribution over the questions $\mathbf{o}$ that the referee asks to the players, i.e., $\mathbf{o} \sim \mu$. 
The players then seek to find a strategy $\pi_\theta(\cdot|\mathbf{o})$, parametrized by $\theta$, that maximizes the probability of winning the game. 

Given a strategy $\pi_\theta$, the probability that the players win the game $G$ is
\begin{equation}
    \label{eq:non_local_game_obj}
    \begin{aligned}
    J(\theta) &= \mathbb{E}_{\mathbf{o} \sim \mu} \mathbb{E}_{\mathbf{a} \sim \pi_{\theta}(\cdot | \mathbf{o})}[V(\mathbf{a}|\mathbf{o})] \\
    % &= \sum_o \mu(o) \sum_a \pi_{\theta}(a|o) V(a|o)
    \end{aligned}
\end{equation}

Traditionally, the literature on nonlocal games assumes that the players know the distribution over the observations $\mu$ as well as the predicate function $V$ \cite{1313847}. 
However, we will show that this knowledge is not required and that we can learn strategies for nonlocal games from samples (i.e.\ with the referee acting as a black box oracle returning the value of the predicate function $V(\mathbf{a}|\mathbf{o})$ after each round of the game).
In particular, we use the score function estimator from the REINFORCE algorithm \cite{williams1992simple} for the gradient of the strategy parameters and then perform stochastic gradient ascent. 
Specifically, we estimate the gradient of the winning probability in \eqref{eq:non_local_game_obj} with respect to the strategy parameters as
\begin{equation}
    \label{eq:non_local_gradient_estimate}
    \widehat{\nabla_\theta J(\theta)} = \frac{1}{N}\sum_{j=1}^{N}V(\mathbf{a}^{(j)}|\mathbf{o}^{(j)})\nabla_\theta \log\pi_\theta(\mathbf{a}^{(j)}|\mathbf{o}^{(j)})
\end{equation}
given $N$ i.i.d. samples $\{(\mathbf{o}^{(j)},\mathbf{a}^{(j)})\}_{j\in[N]}$ from the joint distribution $p_{\theta}(\mathbf{o}, \mathbf{a}) = \mu(\mathbf{o})\pi_{\theta}(\mathbf{a}|\mathbf{o})$, i.e., playing the game $N$~times. 
We provide a simple derivation of this estimator in \ref{appendix:nonlocal_reinforce_proof} for completeness.

\subsection{Entropy Regularization}
\label{appendix:entropy_regularization}
We consider adding a conditional entropy term $\mathcal{H}(\theta)$ to the winning probability of the game, resulting in a new objective function $J_\text{ent}(\theta) = J(\theta) + \alpha\mathcal{H}(\theta)$ with a regularization coefficient $\alpha \geq 0$. The conditional entropy of a strategy $\pi_\theta$ is defined as 
\begin{equation}
    \mathcal{H}(\theta) = -\mathbb{E}_{\mathbf{o} \sim \mu} \mathbb{E}_{\mathbf{a} \sim \pi_\theta(\cdot|\mathbf{o})}[\log \pi_\theta(\mathbf{a}|\mathbf{o})]
\end{equation}
We again use the score function estimator for the gradient of the conditional entropy with respect to the strategy parameters 
\begin{equation}
    \label{eq:entropy_reg_gradient_estimate}
    \hat{\nabla_\theta \mathcal{H}(\theta)} = -\frac{1}{N}\sum_{j=1}^{N}W(\mathbf{a}^{(j)}|\mathbf{o}^{(j)})\nabla_\theta \log\pi_\theta(\mathbf{a}^{(j)}|\mathbf{o}^{(j)})
\end{equation}
where $W(\mathbf{a}^{(j)}|\mathbf{o}^{(j)}) = \log\pi_\theta(\mathbf{a}^{(j)}|\mathbf{o}^{(j)}) + 1$. The proof is similar to the one for \eqref{eq:non_local_gradient_estimate} and is provided in \ref{appendix:entropy_reinforce_derivation}. We then combine the expressions from \eqref{eq:non_local_gradient_estimate} and \eqref{eq:entropy_reg_gradient_estimate} which yields a simple surrogate objective of the form 
\begin{equation}
    \begin{aligned}
        \hat{J}(\theta) = \frac{1}{N}\sum_{j=1}^{N}&\text{sg}(V(\mathbf{a}^{(j)}|\mathbf{o}^{(j)}) -
        \alpha W(\mathbf{a}^{(j)}|\mathbf{o}^{(j)}))\log\pi_\theta(\mathbf{a}^{(j)}|\mathbf{o}^{(j)})
    \end{aligned}
\end{equation}
where sg() denotes the stopgradient operation.

\subsection{Derivations of Gradient Estimators}
\subsubsection{Win Probability}
\label{appendix:nonlocal_reinforce_proof}
Let the win probability for the nonlocal game, $J(\theta)$, be defined as in Equation \ref{eq:non_local_game_obj} and let $p_{\theta}(\mathbf{a}|\mathbf{o} = \mu(\mathbf{o})\pi_{\theta}(\mathbf{a}|\mathbf{o})$. Then
    \begin{equation*}
        \begin{aligned}
        \nabla_\theta J(\theta) &= \nabla_\theta \sum_\mathbf{o} \mu(\mathbf{o}) \sum_\mathbf{a} \pi_{\theta}(\mathbf{a}|\mathbf{o}) V(\mathbf{a}|\mathbf{o}) \\ 
        &= \sum_\mathbf{o} \mu(\mathbf{o}) \sum_\mathbf{a} \nabla_\theta\pi_{\theta}(\mathbf{a}|\mathbf{o}) V(\mathbf{a}|\mathbf{o}) \\ 
        % &= \sum_o \mu(o) \sum_a \pi_{\theta}(a|o) V(a|o) \nabla_\theta \log \pi_{\theta}(a|o) \\
        &= \sum_{\mathbf{o}, \mathbf{a}} \mu(\mathbf{o})\pi_{\theta}(\mathbf{a}|\mathbf{o}) V(\mathbf{a}|\mathbf{o}) \nabla_\theta \log \pi_{\theta}(\mathbf{a}|\mathbf{o}) \\
        &= \mathbb{E}_{\mathbf{o}, \mathbf{a} \sim p_{\theta}}[V(\mathbf{a}|\mathbf{o})\nabla_\theta \log\pi_{\theta}(\mathbf{a}|\mathbf{o})]
    \end{aligned}
\end{equation*}
since $\nabla_\theta \pi_{\theta}(\mathbf{a}|\mathbf{o}) = \pi_{\theta}(\mathbf{a}|\mathbf{o})\nabla_\theta \log\pi_{\theta}(\mathbf{a}|\mathbf{o})$. We then approximate this expectation using the Monte Carlo method, as we can collect samples from $p_{\theta}$ using ancestral sampling (since playing a round of the game $G$ draws a sample $\mathbf{o}$ from $\mu$ and then the players sample an action given their strategy $\pi_\theta(\cdot|\mathbf{o})$).

\subsubsection{Conditional Entropy}
\label{appendix:entropy_reinforce_derivation}
As mentioned in \ref{appendix:entropy_regularization}, we add a conditional entropy term to the objective function. 
We now provide the derivation of the estimator \eqref{eq:entropy_reg_gradient_estimate} 
of the gradient of the conditional entropy.

Let $W(\mathbf{a}|\mathbf{o}) = \log\pi_\theta(\mathbf{a}|\mathbf{o}) + 1$ as before. We then have that
    \begin{equation*}
    \begin{aligned}
        \nabla_\theta \mathcal{H}(\theta) &=  -\nabla_\theta \sum_\mathbf{o} \mu(\mathbf{o}) \sum_\mathbf{a} \pi_{\theta}(\mathbf{a}|\mathbf{o})\log\pi_\theta(\mathbf{a}|\mathbf{o})\\
        &= -\sum_\mathbf{o} \mu(\mathbf{o})\sum_\mathbf{a} W(\mathbf{a}|\mathbf{o})\nabla_\theta \pi_{\theta}(\mathbf{a}|\mathbf{o})\\
        &= -\sum_{\mathbf{o}, \mathbf{a}} \mu(\mathbf{o}) \pi_\theta(\mathbf{a}|\mathbf{o})W(\mathbf{a}|\mathbf{o})\nabla_\theta \log\pi_\theta(\mathbf{a}|\mathbf{o})\\
        &= -\mathbb{E}_{\mathbf{o}, \mathbf{a} \sim p_\theta}[W(\mathbf{a}|\mathbf{o})\nabla_\theta \log\pi_\theta(\mathbf{a}|\mathbf{o})]
    \end{aligned}
\end{equation*}
Thus, we can use the Monte Carlo method again to approximate this expectation using samples from $p_\theta(\mathbf{a}|\mathbf{o})$. 

\subsection{Game Details}
\subsubsection{CHSH}
\label{appendix:chsh}
\begin{table}[t]
\caption{
    CHSH Game Predicate Function, $a\oplus b=x\wedge y$.
    The two players win if they return the equal bits, except in the case where both players receive 1, in which case the win condition is for the players to return opposite bits.
}
\begin{center}
\renewcommand{\arraystretch}{1.2}  % Increases row height by 20%
\begin{tabular}{cc|cccc}
\multicolumn{2}{c}{} & \multicolumn{4}{c}{\textbf{$ab$}} \\
& \multicolumn{1}{c}{} & \textbf{00} & \textbf{01} & \textbf{10} & \textbf{11} \\
\cline{3-6}
& \textbf{00} & 1 & 0 & 0 & 1 \\
\multirow{2}{*}{\rotatebox{90}{\textbf{$xy$}}} & \textbf{01} & 1 & 0 & 0 & 1 \\
& \textbf{10} & 1 & 0 & 0 & 1 \\
& \textbf{11} & 0 & 1 & 1 & 0 \\
\end{tabular}
\end{center}
\label{tab:chsh_payoff}
\end{table}

The CHSH game is perhaps the canonical example of a nonlocal game in which there exists a strategy that leverages quantum entanglement to yield higher win probabilities than is classically possible. 
The CHSH game involves two players who each receive a random ``question" bit from the referee and then, without communication, both players send ``answer" bits to the referee in response. 
The players either jointly win or jointly lose depending on the combination of the question and answer bits. 
Let $x$ and $y$ denote the question bits that the two players receive, and $a$ and $b$ the answer bits that they return. 
The players win if the four bits satisfy the condition $a \oplus b = x \land y$ (where $\land$ denotes a logical $\operatorname{AND}$ operation and $\oplus$ denotes a logical $\operatorname{XOR}$ operation) and lose otherwise. 
The corresponding predicate function $V(a_i|o_i)$ for the CHSH game is provided in Table \ref{tab:chsh_payoff} for clarity. 

Players playing optimally without quantum entanglement will win 3/4 of the time.
(One possible strategy achieving this is for both players to always return the bit 0.)
On the other hand, players with shared quantum entanglement have a strategy whereby they can win with probability $\cos^2(\pi/8) \approx 0.8536$ \cite{1313847}.

For completeness, we present a shared state $\rho$ and POVM measurements acting on $\rho$ that achieve this optimal win probability.
Call the two players $A$ and $B$.
Let the shared state $\rho$ be the Bell state given in \eqref{eq:bell_state}.
When player $A$ receives bit 0 from the referee, she performs the POVM measurement
\begin{equation}
\{M(0|0),M(1|0)\}
=
\left\{
    \begin{bmatrix}
        1&0\\0&0
    \end{bmatrix},
    \begin{bmatrix}
        0&0\\0&1
    \end{bmatrix}
\right\}
\end{equation}
and bases her returned bit on the outcome.
When player $A$ receives bit 1, she performs the POVM measurement
\begin{equation}
\{M(0|1),M(1|1)\}
=
\left\{
    \begin{bmatrix}
        \frac{1}{2}&\frac{1}{2}\\\frac{1}{2}&\frac{1}{2}
    \end{bmatrix},
    \begin{bmatrix}
        \frac{1}{2}&-\frac{1}{2}\\-\frac{1}{2}&\frac{1}{2}
    \end{bmatrix}
\right\}
\end{equation}
and gives a returned bit according to the outcome.
Player $B$ performs a similar strategy but with POVM
\begin{equation}
\{N(0|0),N(1|0)\}
=
\left\{
    \begin{bmatrix}
        \cos^2\theta&\sin2\theta\\
        \sin2\theta&\sin^2\theta
    \end{bmatrix},
    \begin{bmatrix}
        \sin^2\theta&-\sin2\theta\\
        -\sin2\theta&\cos^2\theta
    \end{bmatrix}
\right\}
\end{equation}
when given bit 0 and POVM
\begin{equation}
\{N(0|1),N(1|1)\}
=
\left\{
    \begin{bmatrix}
        \cos^2\theta&-\sin2\theta\\
        -\sin2\theta&\sin^2\theta
    \end{bmatrix},
    \begin{bmatrix}
        \sin^2\theta&\sin2\theta\\
        \sin2\theta&\cos^2\theta
    \end{bmatrix}
\right\}
\end{equation}
when given bit 1, where $\theta=\pi/8$.
Using the Born rule a straightforward calculation shows that, for any combination of bits $o_A$ and $o_B$, the probability of either of the two corresponding win conditions is always $\frac{1}{2}\cos^2(\frac{\pi}{8})$, giving an overall win probability of $\cos^2(\frac{\pi}{8})$.

In other words, this choice of $\rho$ and POVM measurements allows players $A$ and $B$ to sample from a probability distribution $p(a_A,a_B|o_A,o_B)=\tr(\rho\,M(a_A|o_A)\!\otimes\!N(a_B|o_B))$ that cannot be written in the form $\sum_\lambda p(\lambda)p(a_A|o_A,\lambda)p(a_B|o_B,\lambda)$, the form which all decentralized, communication-free sampling must take when the players do not share entanglement.

\subsubsection{GHZ}
The GHZ game, as formulated by \citet{vaidman1999variations}, is an often cited example of quantum pseudo-telepathy, where players can derive a quantum strategy that wins the game with certainty (probability 1.0), whereas classically the win probability of optimal strategy is 0.75. However, unlike the CHSH game, the GHZ game comprises three players and a referee. Denote the bit received by the third player as $z$ and their response bit as $c$. The players win the game if $x \lor y \lor z = \text{mod}_2(a + b + c)$ (where $\lor$ denotes the logical $\operatorname{OR}$ operation and $\text{mod}_2(a + b + c)$ returns the sum of the response bits modulo 2) and lose otherwise. In addition, the referee only sends one of \{000, 110, 101, 011\} as the combination of questions to the players. We include results for the GHZ game to demonstrate that our framework is not limited to two-player games and that quantum advantage can still be obtained in this setting.

\subsubsection{Rendezvous}
The rendezvous game is another nonlocal game for which classical and quantum strategies have been analyzed. The game, as defined by \citet{PhysRevA.109.042201}, is played on a graph, where $r \geq 2$ players are placed on a starting vertex (chosen randomly and uniformly by the referee) and need to move along the edges of the graph $h \geq 1$ times, sending their moves to the referee. The players are unaware of the location of the other players on the graph at all times and are not allowed to communicate with one another. The players win the game if after the $h$ moves per player, they all end up at the same vertex. In the following experiments, we set $h = 1$ and $r = 2$ to compare to the results from \cite{PhysRevA.109.042201}.

The first graph analyzed, which we refer to as ``Tetra", can be seen in Figure \ref{fig:rendezvous1_graph}, which was also studied in \cite{PhysRevA.109.042201}, where they state the win probability under the optimal classical strategy is 0.6250 and also provide an upper bound for the win probability for the optimal quantum strategy as 0.64506 using the NPA method from \cite{PhysRevLett.98.010401}. 

The rendezvous game was also analyzed for another graph, ``Cube", depicted in Figure \ref{fig:rendezvous2_graph}, where the win probability under the optimal classical strategy is 0.3125 and the NPA upper bound for the win probability for the optimal quantum strategy is 0.32253. 

\begin{figure}[!ht]
    \centering
    \begin{subfigure}[t]{0.48\textwidth}
        \centering
        \includegraphics[width=\linewidth]{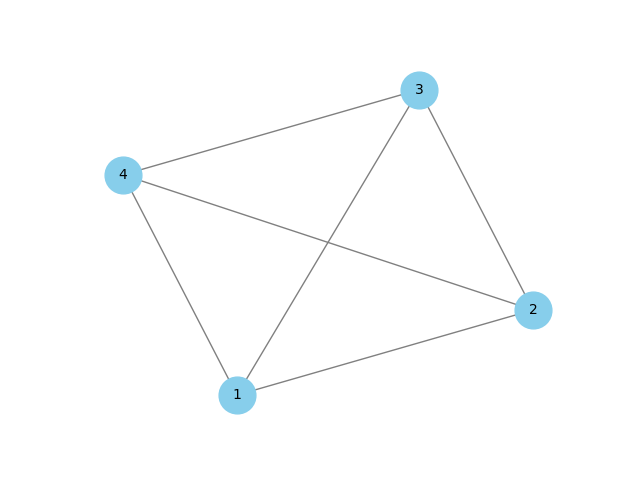}
        \caption{Experiment 1 (Tetra)}
        \label{fig:rendezvous1_graph}
    \end{subfigure}
    \hfill
    \begin{subfigure}[t]{0.48\textwidth}
        \centering
        \includegraphics[width=\linewidth]{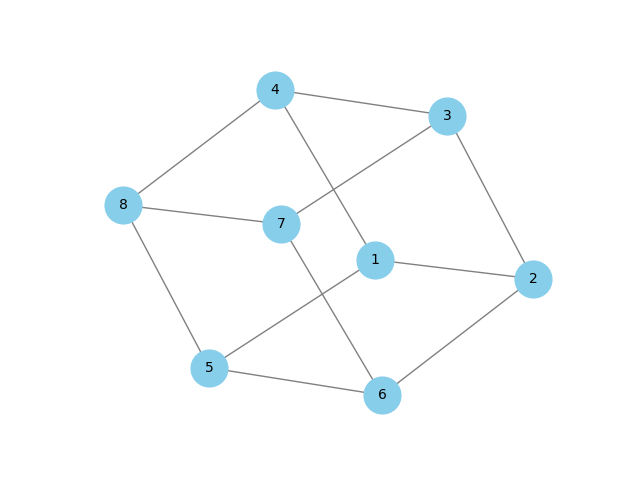}
        \caption{Experiment 2 (Cube)}
        \label{fig:rendezvous2_graph}
    \end{subfigure}
    \caption{Graphs used for the Rendezvous experiments.}
    \label{fig:rendezvous_graphs}
\end{figure}

\subsection{Experiment Details}
\label{appendix:experiment_details}
In order to learn strategies that leverage shared quantum entanglement, we used strategies of the form \eqref{eq:quantum-policy}, where we parametrized the POVM logits (the complex valued matrices used as inputs to QuantumSoftmax) using a neural network and subsequently used QuantumSoftmax, presented in Algorithm \ref{alg:quantum-softmax}, to obtain valid POVMs. In addition, we used a neural network to parameterize a matrix $B \in \mathbb{C}^{d\times d}$ and then obtain the density matrix $\rho = \frac{1}{c}B^H B$ where $c = \mathrm{tr}(B^H B)$, ensuring that $\rho$ is PSD and $\mathrm{tr}(\rho) = 1$. Our models were implemented in Jax using the Adam optimizer \cite{kingma2014adam} with gradients computed according to \eqref{eq:non_local_gradient_estimate} and \eqref{eq:entropy_reg_gradient_estimate}. For each game, we used 30 random initializations of our model, with and without entropy regularization (initializations kept consistent for means of comparison), resulting in training 60 models per game. Relevant hyperparameters were kept the same for all games, with a batch size of 512, a learning rate of 3e-2, the entropy regularization coefficient set to 0.2 (0.0 when entropy regularization is turned off), and the number of optimization steps equal to 5000.

After training was completed, we selected the best strategy (in terms of the win probability of the nonlocal game) over the 5000 training steps and computed its quantum advantage, defined as the difference between the win probability of the learned strategy and the win probability of the classically optimal strategy. We then normalize this learned advantage by the maximum possible advantage that can be obtained (using the theoretical upper bounds on the win probability for a quantum strategy for CHSH and GHZ and the results from the NPA method for the rendezvous games). In Table \ref{tab:nonlocal_results}, we report the quantum advantage (as a percentage of the maximum advantage that can be obtained) for the worst of the 30 training runs with and without entropy regularization for each of the nonlocal games. We observe that without entropy regularization, at least one of the 30 runs does not outperform the classically optimal solution (and thus has 0 quantum advantage) for each of the nonlocal games. However, when entropy regularization is added, all training runs learn strategies with quantum advantage, successfully leveraging entanglement to obtain winning probabilities higher than the best classical strategy. 

\begin{table}[ht]
    \caption{Learned Quantum Advantage as a Percentage of the Maximal Advantage for the Worst Training Run}
    \label{tab:nonlocal_results}
    \begin{center}
    \begin{tabular}{|c|c|c|c|c|}
    \hline
    Entropy? & CHSH & GHZ & Tetra & Cube \\
    \hline
    \newcrossmark & 0.00\% & 0.00\% & 0.00\% & 0.00\% \\
    \hline
    \newcheckmark & 99.90\% & 98.60\% & 84.25\% & 40.88\% \\
    \hline
    \end{tabular}
    \end{center}
\end{table}

\section{Derivation of Modified MAPPO Surrogate Objective Function}
\label{appendix:surrogate_objective_function}
The surrogate objective function of PPO, before clipping, is
\begin{equation}
L(\theta) = \mathbb{E}_\tau\left[
    \sum_t
    \frac{
        \pi_{\theta}(a_t|h_t)
    }{
        \pi_{\theta_0}(a_t|h_t)
    }A_t
\right],
\end{equation}
where $A_t$ is the advantage and the expectation is over trajectories $\tau$ induced by the policy with parameters $\theta_0$.
This is chosen so that its gradient evaluated at $\theta_0$ is the policy gradient estimate
\begin{equation}\label{eq:ppo_policy_gradient}
\nabla_\theta J(\theta)\big\rvert_{\theta_0}
=
\mathbb{E}_{\tau}\left[
    \sum_t
    \nabla_{\theta}\log\pi_{\theta}(a_t|h_t)
    A_t
\right]\Bigg\rvert_{\theta_0}.
\end{equation}
Replacing the policy in \eqref{eq:ppo_policy_gradient} here with the form $\pi(\mathbf{a}_t|\mathbf{h}_t)=\sum_{\mathbf{x}_t}q(\mathbf{x}_t|\mathbf{h_t})\prod_i\pi_i(a_{it}|x_{it},h_{it})$ gives the policy gradient
\begin{equation}
\begin{aligned}
\nabla_\theta J(\theta)
&=
\mathbb{E}_{\tau}\left[
    \sum_t
    \frac{
        \nabla_\theta q_\theta(\mathbf{x}_t|\mathbf{h}_t)
    }{
        q_\theta(\mathbf{x}_t|\mathbf{h}_t)
    }A_t
    % \right]\\
    \right]
    % &\quad+
    +
    \mathbb{E}_\tau
    \left[
    \sum_t
    \sum_i
    \frac{
        \nabla_{\theta}\pi_{i\theta}(a_{it}|x_{it},h_{it})
    }{
        \pi_{i\theta}(a_{it}|x_{it},h_{it})
    }A_t
\right],
\end{aligned}
\end{equation}
where the trajectories $\tau$ are now understood to include sampled values of $\mathbf{x}_t$.
This suggests a surrogate objective function of two terms, the familiar one for the actor models for the policies $\pi_{i}(a_{i}|x_i,h_{i})$,
\begin{equation}
    L_\text{actor}(\theta)
    =
    \mathbb{E}_\tau\left[\sum_t\sum_i
    \frac{
        \pi_{i\theta}(a_{it}|x_{it},h_{it})
    }{
        \pi_{i\theta_0}(a_{it}|x_{it},h_{it})
    }
    A_t
    \right],
\end{equation}
as well as a new term for the coordinator model for the distribution $q(\mathbf{x}|\mathbf{h})$,
\begin{equation}
    L_\text{coord}(\theta)
    =
    \mathbb{E}_\tau\left[\sum_t
    \frac{
        q_\theta(\mathbf{x}_t|\mathbf{h}_t)
    }{
        q_{\theta_0}(\mathbf{x}_t|\mathbf{h}_t)
    }
    A_t
    \right].
\end{equation}
Again, these expectations are over trajectories that include sampled values of $\mathbf{x}_t$ in addition to sampled states, observations, and actions.
These terms can each then be clipped as in PPO.

\section{Physical Implementation}
\label{appendix:physical_implementation}
Crucial to the ideas of this paper is the fact that it is possible for decentralized, non-communicating agents to sample from a policy of the form in \eqref{eq:quantum-policy}.
However, doing so will in general require that the agents each have a quantum system of sufficient dimension, that the joint state of their systems is in fact $\rho$, and that they can each physically implement their measurement $M_i(\,\cdot\,|h_i)$ on their system.
At present, this is a nontrivial technological feat, even for simple cases. 
In practice, the quantum systems may be small, say 2-dimensional, $\rho$ may be less than fully entangled, and only a subset of measurements might be implementable.
Given any of these restrictions the possible policies $\sum_\mathbf{x}\trace\left(\rho\ \bigotimes_i\! M_i(x_i|h_i)\right)\prod_i\pi_i(a_i|x_i,h_i)$ will not include the full set of entangled policies.
Whatever the limitations, however, the methods of this paper provide a path for learning strategies that incorporate the precise quantum resources that are available, by modeling $\rho$ and the POVMs appropriately.

When implementing with physical systems one is faced with the potential difficulty of calculating $q_\theta(\mathbf{x}|\mathbf{h})=\trace\left(\rho_\theta\ \bigotimes_i\! M_{i\theta}(x_i|h_i)\right)$, as well as its gradient with respect to $\theta$, for given $\mathbf{a}$ and $\mathbf{h}$. 
These are necessary for calculating the objective function gradient in the modified MAPPO algorithm we describe.
Measuring a quantum state leaves the state changed, so while values of $q_\theta(\mathbf{x}|\mathbf{h})$ could be estimated using the physical system by taking many samples, this would require re-preparing the state $\rho$ many times.
One solution to these problems is simulation, obtaining models of the physical state and possible measurements through tomography.
For the small quantum systems possible with current technology, such simulation is not prohibitive.
The simulation would be necessary only for training and not for deployment, where only sampling from $q_\theta(\mathbf{x}|\mathbf{h})$ is necessary.
This sampling of course must be done with the physical entangled quantum system in order to be done in a decentralized way without communication.

Creating an entangled state cannot, itself, be done in a decentralized way without communication, however.
If the agents cannot communicate when deployed, the states $\rho$ must be pre-shared.
In practice, each agent could have a quantum memory with many copies of their part of $\rho$ stored ahead of time.
At each time step a copy of $\rho$ would be consumed as the agent made a measurement that time step.

In cases where the communication constraint is simply due to latency \cite{ding2024coordinating, ding2025quantumnonlocalitylatencyconstraints}, i.e.\ where communication is possible, but slower than the timescale at which decisions must be made, one could share entangled states in real time, having new entangled states arrive continuously.
Practically this could be implemented with entangled photons sent through optical fiber.

\section{The Quantum Coordinator Ansatz}
%\section{Quantum Coordinators}
\label{appendix:quantum_coordinators}
Any policy obtainable with a shared quantum state $\rho$ can be obtained from the same state through the quantum coordinator ansatz described in \eqref{eq:coordinated-policy-parameterized}.
\begin{proposition}\label{advice_sufficient}
    Any joint policy of the form $\pi(a_1\cdots a_n|h_1\cdots h_n)=\trace\left(\rho\,\bigotimes_i^nM_i(a_i|h_i)\right)$ for POVMs $M_i(\cdot|h_i)$ can also be written in the form $\sum_x\trace(\rho\,\bigotimes_i^n\tilde{M_i}(x_i|h_i))\prod_i\pi_i(a_i|x_i,h_i)$ for some POVMs $\tilde{M}_i(\cdot|h_i)$ and local policies $\pi_i(a_i|x_i,h_i)$.
    The reverse is also true.
\end{proposition}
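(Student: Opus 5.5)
For the forward direction, I will take the advice alphabet equal to the action alphabet and make the local policies copy their advice. Set $\mathcal{X}_i=\mathcal{A}_i$ and $\tilde M_i(\cdot|h_i)=M_i(\cdot|h_i)$, which are POVMs by assumption. Define $\pi_i(a_i|x_i,h_i)=\delta_{a_i,x_i}$, a valid deterministic local policy. Then
\begin{equation*}
\sum_{\mathbf{x}}\trace\Big(\rho\,\bigotimes_i \tilde M_i(x_i|h_i)\Big)\prod_i\delta_{a_i,x_i}=\trace\Big(\rho\,\bigotimes_i M_i(a_i|h_i)\Big),
\end{equation*}
since only the term $\mathbf{x}=\mathbf{a}$ survives. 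This direction is immediate.

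For the reverse direction, given POVMs $\tilde M_i(\cdot|h_i)$ on $\mathcal{X}_i$ and local policies $\pi_i(a_i|x_i,h_i)$, I will absorb the classical post-processing into the measurement. Define
\begin{equation*}
M_i(a_i|h_i)=\sum_{x_i}\pi_i(a_i|x_i,h_i)\,\tilde M_i(x_i|h_i).
\end{equation*}
I then check that this is a POVM. Each $M_i(a_i|h_i)$ is a nonnegative combination of PSD matrices, so it is PSD. Summing over $a_i$ and using $\sum_{a_i}\pi_i(a_i|x_i,h_i)=1$ gives $\sum_{x_i}\tilde M_i(x_i|h_i)=I$.

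Finally, I expand $\trace\big(\rho\bigotimes_i M_i(a_i|h_i)\big)$ using multilinearity of the tensor product and linearity of the trace. Distributing the sums over $x_1,\dots,x_n$ yields $\sum_{\mathbf{x}}\prod_i\pi_i(a_i|x_i,h_i)\,\trace\big(\rho\bigotimes_i\tilde M_i(x_i|h_i)\big)$, which is exactly the coordinator form.

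No step is a real obstacle. The only care needed is the bookkeeping in that final expansion, namely that the product of sums over the tensor factors becomes a single sum over joint advice $\mathbf{x}$. The finiteness of the advice sets, implicit in the ansatz, ensures every sum is finite, so the resulting $M_i$ are bona fide finite-outcome POVMs.
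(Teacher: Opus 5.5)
Your proposal is correct and follows the paper's proof essentially verbatim. For the forward direction you take $\tilde M_i = M_i$ with copy-the-advice local policies $\pi_i(a_i|x_i,h_i)=\delta_{a_i,x_i}$. For the reverse you absorb the post-processing into $M_i(a_i|h_i)=\sum_{x_i}\pi_i(a_i|x_i,h_i)\tilde M_i(x_i|h_i)$, check it is a POVM, and undistribute the sum over $\mathbf{x}$ by multilinearity; your remark on finiteness of the advice sets is only a point the paper leaves implicit.
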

\begin{proof}
    Suppose $\pi(a_1\cdots a_n|h_1\cdots h_n)$ has the form $\trace\left(\rho\,\bigotimes_i^nM_i(a_i|h_i)\right)$ described.
    Then simply choose $\tilde{M_i}(\cdot|h_i)=M_i(\cdot|h_i)$ for all $h_i$ and let $\pi_i(a_i|x_i,h_i)=\delta_{a_ix_i}$.
    Now suppose instead that $\pi(a_1\cdots a_n|h_1\cdots h_n)$ has the form $\sum_x\trace(\rho\,\bigotimes_i^n\tilde{M}_i(x_i|h_i))\prod_i\pi_i(a_i|x_i,h_i)$.
    Undistributing the sum over $x=x_1\cdots x_n$ gives
    \begin{equation*}
    \begin{aligned}
        \pi(a_1\cdots a_n|h_1\cdots h_n)
        &=
        \sum_x\trace\left(\rho\,\bigotimes_i^n\tilde{M}_i(x_i|h_i)\right)\prod_i\pi_i(a_i|x_i,h_i)\\
        &=
        \trace\left(
        \rho\,\bigotimes_i^n\!\left(\sum_{x_i}\pi_i(a_i|x_i,h_i)\tilde{M}_i(x_i|h_i)\right)
        \right).
    \end{aligned}
    \end{equation*}
    Let $M_i(a_i|h_i)=\sum_{x_i}\pi_i(a_i|x_i,h_i)\tilde{M}_i(x_i|h_i)$.
    The matrices $M_i(a_i|h_i)$ so chosen form a POVM: each $M_i(a_i|h_i)$ is PSD as it is a positive combination of PSD matrices $\tilde{M}_i(x_i|h_i)$, and they sum to the identity because
    \begin{equation*}
    \begin{aligned}
        \sum_aM_i(a|h_i)
        &=
        \sum_a\sum_{x_i}\pi_i(a|x_i,h_i)\tilde{M}_i(x_i|h_i)\\
        &=
        \sum_{x_i}\left(\sum_a\pi_i(a|x_i,h_i)\right)\tilde{M}_i(x_i|h_i)\\
        &=
        \sum_{x_i}\tilde{M}_i(x_i|h_i)\\
        &=
        I.
    \end{aligned}
    \end{equation*}
\end{proof}
Thus the quantum coordinator ansatz, at least without further constraints on the allowed POVMs $\tilde{M}_i(x_i|h_i)$ is not constraining.
This is only true if we allow the POVMs $\tilde{M}(\cdot|h_i)$ to have sufficiently many outcomes (i.e.\ if we allow $x$ to take sufficiently many values).
In fact, it no longer holds if the number of outcomes of the $\tilde{M}(\cdot|h_i)$ are at all less than the number of outcomes of $M(\cdot|h_i)$.
This is implied by the following result:
\begin{proposition}\label{advice_insufficient}
    For any positive integers $k$ and $d$, there exists a $k$-outcome, $d$-dimensional POVM $M(\cdot)$ that cannot be written in the form $M(a)=\sum_{b}P_{ab}\tilde{M}(b)$ where $P$ is a stochastic matrix and $\tilde{M}(\cdot)$ is a POVM with fewer than $k$ outcomes.
\end{proposition}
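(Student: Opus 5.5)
The plan is to build, for each $(k,d)$, a POVM $M$ whose elements are ``mutually incompatible'' in their ranges, so that no single element of a coarser POVM can feed two different outcomes of $M$. The whole argument rests on one range lemma. Suppose $M(a)=\sum_b P_{ab}\tilde M(b)$ with every term PSD and $P_{ab}>0$. Then $P_{ab}\tilde M(b)\preceq M(a)$, and hence $\mathrm{range}\,\tilde M(b)\subseteq\mathrm{range}\,M(a)$. The reason is that for $v\perp\mathrm{range}\,M(a)$ we get $0\le P_{ab}\,v^\mathsf{H}\tilde M(b)v\le v^\mathsf{H}M(a)v=0$, so $v\in\ker\tilde M(b)$ by positive semidefiniteness. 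I will also use that every $\tilde M(b)$ with some $P_{ab}>0$ can be taken nonzero; outcomes $b$ with $\tilde M(b)=0$ can be dropped without changing the count argument.

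\emph{Caveat on the statement.} It fails as stated for $d=1$ with $k\ge2$. In that case any probability vector $(p_a)$ equals $P\tilde M$, where $\tilde M$ is the one-outcome POVM $\tilde M(1)=1$ and $P_{a1}=p_a$. The case $k=1$ is vacuous, since a POVM needs at least one outcome. I will therefore prove it for $d\ge2$ (and all $k$), and note this restriction explicitly.

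\emph{Case $k\le d$.} Partition an orthonormal basis of $\mathbb{C}^d$ into $k$ nonempty blocks, and let $M(a)$ be the orthogonal projector onto the span of block $a$. Suppose $M=P\tilde M$ with $\tilde M$ having outcome set $B$. Each nonzero $\tilde M(b)$ has positive weight $P_{ab}$ for at most one $a$, because by the lemma its range would otherwise lie in two mutually orthogonal subspaces and so be $\{0\}$. Each $a$ has $M(a)\neq0$, so it needs at least one $b$ with $P_{ab}>0$ and $\tilde M(b)\ne0$. This gives an injection-free map from $B$ onto $[k]$, so $|B|\ge k$.

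\emph{Case $k>d\ge2$.} Choose $k$ pairwise non-parallel unit vectors $\psi_1,\dots,\psi_k$ and weights $c_a>0$ with $\sum_a c_a\psi_a\psi_a^\mathsf{H}=I_d$, and set $M(a)=c_a\psi_a\psi_a^\mathsf{H}$. The main obstacle is producing such a frame explicitly for every $k>d$. I would first try this: take the standard basis $e_1,\dots,e_d$ together with $k-d$ extra distinct vectors $u_j=(e_1+t_je_2)/\|\cdot\|$ with distinct real $t_j\neq0$. Give them small weights $\varepsilon$ and compensate by lowering the weights on $e_1,e_2$, keeping those positive and adjusting for the off-diagonal terms. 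Because the off-diagonal contributions $\varepsilon\sum_j t_j/(1+t_j^2)$ must vanish, I will choose the $t_j$ in $\pm$ pairs, plus $t=1$ and $t=-1$ balancing as needed. When $k-d$ is odd, one extra vector can instead be $e_1$-free, e.g.\ $(e_2+e_3)$-type if $d\ge3$. For $d=2$ I would use the symmetric ``$k$-star'' frame $\psi_a=(\cos(\pi a/k),\sin(\pi a/k))$ with $c_a=2/k$, which sums to $I_2$ by the standard trigonometric identity. This star construction also embeds into higher $d$ as a $\psi$-star on $\mathrm{span}(e_1,e_2)$ plus $e_3,\dots,e_d$, which in fact handles all $k>d$ uniformly; I will adopt it as the construction. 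Then the lemma forces each nonzero $\tilde M(b)$ used by outcome $a$ to be a positive multiple of $\psi_a\psi_a^\mathsf{H}$. Since the $\psi_a$ are pairwise non-parallel, no nonzero $\tilde M(b)$ serves two outcomes, and the same counting gives $|B|\ge k$.
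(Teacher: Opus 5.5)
Your proof is correct and follows the paper's route: orthogonal projectors for $k\le d$, pairwise non-proportional rank-one elements for $k>d$, and a count showing that distinct outcomes $a$ need distinct nonzero $\tilde M(b)$. The differences are minor: your range-inclusion lemma replaces the paper's quadratic-form and rank arguments, and your real star frame replaces the paper's harmonic frame $M(a)=\frac{1}{k}v_av_a^{\mathsf{H}}$, $v_a=(1,\omega^{-a},\dots,\omega^{-(d-1)a})^\top$. Your $d=1$ caveat is also correct and exposes a flaw in the paper, whose claim that no two $M(a)$ are proportional fails when $d=1$ (every $M(a)=1/k$), so the proposition as stated is false for $d=1$, $k\ge 2$.
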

\begin{proof}
    We will consider the cases $k\leq d$ and $k>d$ separately.
    \begin{description}
    \item[Case $k\leq d$.]
    Choose $M(a)$ to be $k$ commuting projectors in $\mathbb{C}^{d\times d}$ whose sum is $I_{d\times d}$.
    % Note that this implies $M(a)M(a')=\mathbf{0}$ for $a\neq a'$.
    Suppose that $M(a)=\sum_bP_{ab}\tilde{M}(b)$ where $\tilde{M}(\cdot)$ is a POVM and $P$ is a stochastic matrix.
    For every $M(a)$ choose a unit vector $v_a$ entirely in the subspace defined by $M(a)$.
    % Then $v_a^\dagger M(a')v_a=\delta_{aa'}$.
    For every $a$ there must be some $\t{M}(b_a)$ with positive support in the sum $\sum_bP_{ab}\t{M}(b)$ and where $v_a^\dagger\t{M}(b_a)v_a$ is not zero (else $v_a^\dagger M(a')v_a$ would be zero).
    For any other $a'$ distinct from $a$, the support of $\t{M}(b_a)$ in the sum $M(a')=\sum_{b}P_{a'b}\t{M}(b)$ must be zero, otherwise $v_a^\dagger M(a')v_a=\sum_bP_{a'b}v_a^\dagger\t{M}(b)v_a$ would be nonzero as it would include the positive term $P_{a'b_a}v_a^\dagger\t{M}(b_a)v_a$ together in a sum with all non-negative terms.
    Thus each $a$ corresponds to some $b_a$, and distinct $a$ correspond to distinct $b_a$.
    This implies there are at least as many values of $b$ as $a$.
    \item[Case $k>d$.]
    Consider the matrices 
    \begin{equation*}
    \begin{aligned}
    M(a)
    &=
    \frac{1}{k}
    \begin{bmatrix}
    1 & \omega^{-a} & \omega^{-2a} & \cdots & \omega^{-(d-1)a}
    \end{bmatrix}^\top
    \begin{bmatrix}
    1 & \omega^a & \omega^{2a} & \cdots & \omega^{(d-1)a}
    \end{bmatrix}
    \\&=
    \frac{1}{k}
    \begin{bmatrix}
    1 & \omega^{a} & \cdots & \omega^{(d-1)a} \\
    \omega^{-a} & 1 & \cdots & \omega^{(d-2)a} \\
    \vdots & \vdots & \ddots & \vdots \\
    \omega^{-(d-1)a} & \omega^{-(d-2)a} & \cdots & 1
    \end{bmatrix}
    \end{aligned}
    \end{equation*}
    where $a$ runs from $1$ to $k$ and where $\omega=e^{2\pi i/k}$.
    These matrices form a POVM, as they are manifestly positive semidefinite and sum up to the identity by construction.
    Suppose that $M(a)=\sum_bP_{ab}\tilde{M}(b)$ as described.
    For any $a$, the matrix $M(a)$ is thus a positive combination of some of the matrices $\tilde{M}(b)$.
    Because the matrix $M(a)$ has rank one, any of the $\tilde{M}(b)$ with positive support in the combination must be proportional to $M(a)$.
    (All of the $\tilde{M}(b)$ with positive support must be proportional to each other, lest the sum have rank greater than one.
    Then, for their sum to equal $M(a)$, they must all be proportional to $M(a)$.)
    Thus at least one of the matrices $\tilde{M}(b)$ must be proportional to $M(a)$.
    This is true for any $a$.
    None of the matrices $M(a)$ are proportional to each other, so there must be at least as many distinct matrices $\tilde{M}(b)$ as matrices $M(a)$.
    \end{description}
\end{proof}

\section{The Multi-Router Queueing Problem and Experiment}
\label{appendix:server-router}
In this appendix we first describe our formulation of the multi-router queueing problem introduced in \cite{da2025entanglement, dasilva2026entanglement} as an MDP.
We then describe details of our application of a modified MAPPO method to learn strategies that use quantum entanglement.

\subsection{Multi-Router Queueing as an MDP}
Recall the setup of two routers that must each decide where to send incoming customer requests.
The problem statement of \cite{dasilva2026entanglement} specifies that customer requests arrive in pairs (one to each router) simultaneously.
The routers make a routing decision upon arrival of the new requests, so each new arrival is a new time step of the MDP.

\subsubsection{States}\label{states}
Each server queue has its own queue state.
A queue is either empty (so that the server is working on the baseline task) or full (so that the server is working on a customer request).
If empty, the queue state includes a continuous variable specifying how long the server has been working on the baseline task uninterrupted.
If full, the queue state includes a continuous variable specifying how much time the current queue of customer requests will take to complete.
All this can be combined into a single continuous variable $q$.
When $q$ is positive it signifies a full queue, and the magnitude of $q$ represents the total size of the customer tasks left in the queue.
When $q$ is negative it signifies an empty queue, and the magnitude represents the unbroken amount of time the server has been currently working on the baseline task.
The MDP state is the joint state $(q_1, q_2)$ of the two queues when the new customer request pair arrives.

\subsubsection{Actions}\label{actions}
During a single time step the routers receive customer requests of sizes $x_1$, $x_2$ respectively.
They then take actions $a_1, a_2\in \{0,1\}$ representing the server(s) to which they route the requests, where $0$ represents the the left server (say) and $1$ represents the right server.
The action determines how much total load is sent to a given server.
The new load on the left server is $(1-a_1)x_1+(1-a_2)x_2$ and the new load on the right server is $a_1x_1+a_2x_2$.
As stipulated by \citet{dasilva2026entanglement}, the agents observe nothing other than their current, local values of $x$ when choosing their actions, and they do not retain histories of any past observations.

\subsubsection{State transitions}\label{state_transitions}
The state $q$ of a queue transitions to a new state $q\rightarrow q'$ every time step.
Let $\Delta q$ be the additional customer request load sent to the server and let $\Delta t$ be the time before the next pair of customer requests.

Suppose we start with negative $q$, so the server is currently working on the baseline task and has been for time $|q|$.
If no new requests are sent to the server ($\Delta q=0$), we simply subtract $\Delta t$ from $q$ to the new state, representing the fact that the server, at the next time step, will have been working uninterrupted on the baseline task for an additional time $\delta t$.
So in this case $q\rightarrow q'=q-\Delta t$
On the other hand, if $\Delta q$ is non-zero, the baseline task is interrupted and the new state is $q'=\Delta q - \Delta t$, either representing the new queue size if $\Delta q>\Delta t$ or the uninterrupted time working on the baseline task if $\Delta q<\Delta t$.

Now suppose we start with $q$ positive.
Then after the adding the load $\Delta q$ and waiting the time $\Delta t$ until the next time step, the new state is $q'=q+\Delta q-\Delta t$, which may be either positive or negative.

All this can be captured by the following rule
\begin{equation}\label{queue_transition}
\begin{aligned}
q\rightarrow
\Delta q-\Delta t
+\begin{cases}
0\text{\quad if $q<0$ and $\Delta q>0$}\\
q\text{\quad otherwise}
\end{cases}
\end{aligned}
\end{equation}
In other words, the new queue state is $q'=q+\Delta q-\Delta t$ except in the case when an incoming request interrupts an ongoing baseline task and starts a new queue.
In that case the new queue state is $q'=\Delta q-\Delta t$ instead.

The amount of new load $\Delta q$ added to a queue depends on the actions of the routers.
Specifically we have $\Delta q_1=(1-a_1)x_1+(1-a_2)x_2$ and $\Delta q_2=a_1x_1+a_2x_2$.
So in all the joint state transition rule, given joint action $(a_1,a_2)$, is
\begin{equation}\label{eq:full_transition_rule}
\begin{aligned}
    q_1&\rightarrow
    (1-a_1)x_1+(1-a_2)x_2-\Delta t
    +
    \left\{\begin{rcases}
        0\\
        q_1
    \end{rcases}\right.\\
    q_2&\rightarrow
    a_1x_1+a_2x_2-\Delta t
    +
    \left\{\begin{rcases}
        0\\
        q_2
    \end{rcases}\right.
\end{aligned}
\end{equation}
with the upper case chosen for router $i$ if $q_i<0$ and $\Delta q_i>0$.
Note everything here is either the action variables $a_i$ or the random variables $x_i$ and $\Delta t$.
Following \cite{dasilva2026entanglement}, the random variables $x_i$ are chosen independently from an exponential distribution.
Similarly, the random variable $\Delta t$ is chosen from a different exponential distribution.
In our experiments we use the choices $x_i\sim \mu e^{-\mu x_i}$ where $\mu=1$ and $\Delta t\sim \lambda e^{-\lambda \Delta t}$ where $\lambda=0.8$.

\subsubsection{Rewards}
The reward for each time step is the sum of two independent rewards, one for each queue.
The servers are rewarded for throughput on the baseline task.
To model a case where the efficiency at the baseline task increases with uninterrupted work, the servers should be rewarded $T(t)$ for each uninterrupted interval, where $T(t)$ is a superlinear function of $t$.
In our experiments we choose the function $T(t)=t^2$.
We must formulate the MDP rewards for each time step such that the total over an unbroken interval of total time $t$ equals $T(t)$.
An interval of work on the baseline task might be spread over several consecutive MDP time steps, however, so there is some question about how and when to give out the reward for an interval.
For example, we could wait till the interval has ended to give the reward $T(t)$, giving out a reward of $0$ for intervening time steps.
Waiting possibly several time steps before giving out the reward could make training more difficult, however.
Instead, one could divvy up the reward so that the reward so far for an ongoing interval of current length $t$ is always $T(t)$.
For example, suppose that after a time step of time $\Delta t$ the server has been working on the task for a total time $t>\Delta t$.
Then for this particular time step give out the reward $T(t) - T(t-\Delta t)$.
With this reward convention in mind let's go through all the cases.

The reward for a queue will depend on the old state of the queue $q$ and the values $\Delta q$ and $\Delta t$ for that time step.
If $q$ was positive then the uninterrupted time reward is zero if $q'$ remains positive and $T(-q')$ if it is negative, as in this case the server has started the baseline task again and worked on it for $-q'$ time.
These can be summarized as $T([-q']_+)$ where $[x]_+:=\max\{0,x\}$.

If $q$ was negative then there are three cases to consider.
If $q'$ is positive then the server did not do any of the baseline task this time step, so the reward is $0$.
If $q-\Delta t<q'<0$ then the baseline task was interrupted and then started again after handling a customer request, so the reward is $-q'$.
Both of these cases are described by $T([-q']_+)$.
If $q'=q-\Delta t$, however, which happens when only $\Delta q=0$, the baseline task has continued uninterrupted from the previous time step.
The reward should then be $T(-q')-T(-q)$.

All the above cases can be summarized as
\begin{equation}\label{time step_rewards}
\begin{cases}
    T(-q')-T(-q)\text{\quad if $q<0$ and $\Delta q=0$}\\
    T([-q']_+)\text{\quad otherwise}
\end{cases}
\end{equation}
In other words, the reward is always $T$ of the positive part of $-q'$ except in the case where the baseline task is never interrupted, in which case the reward is $T(-q')-T(-q)$.
Handling the rewards in the above way should result in a separate reward of $T(t)$ for every uninterrupted interval of time $t$ encountered in a trajectory.

The above is the reward for a single server.
The total reward is
\begin{equation}\label{total_reward}
\begin{aligned}
R(q_1,q_2,a_1,a_2,q'_1,q'_2)
&=
\left\{
\begin{rcases}
    T(-q_1')-T(-q_1)\\
    T([-q_1']_+)
\end{rcases}
\right.
\\&+
\left\{
\begin{rcases}
    T(-q_2')-T(-q_2)\\
    T([-q_2']_+)
\end{rcases}
\right.
\end{aligned}
\end{equation}
where the top values are used respectively whenever $q_i<0$ and $\Delta q_i=0$.

\subsubsection{Note on observations}
Above we specified that the agents only observe their local values of $x$.
This is not the only possibility.
Though it is the assumption made by \cite{dasilva2026entanglement}, and the assumption we make in our MDP formulation used in our experiment, our method is flexible enough to allow for alternatives.
It is an interesting follow-up question to determine whether there is still quantum advantage under the various possibilities.

In addition to their local job sizes $x$, we could imagine the agents seeing the size of one or both queues or seeing $\Delta t$.
This last possibility is easily motivated, as in any real implementation of this set up the servers would indeed be able to measure the time $\Delta t$  between arrivals.
Knowing the inter-arrival time $\Delta t$ in addition to $x$ could, in the absence of direct observations of $q$, gives the agents partial information about how full the queues are.

\subsubsection{Quality of service constraint}
The authors \citet{dasilva2026entanglement} only consider solutions that satisfy the quality of service constraint that the long-term average customer wait time be less than a given constant $W_\ell$.
They define the wait time as the time until a server begins work on the customer's task.
A question arises as to how to reckon wait times when both customer requests are sent to the same server.
In this case the requests are randomly ordered and the second request will have its wait time increased by the size of the request that goes before it.

With this constraint on the customer wait time, we can consider the problem an example of a constrained MDP (CMDP).
If we consider a trajectory with $T$ steps, we allow the agents a ``budget" of $TW_\ell$ that constrains the total customer wait time $\sum_{t=1}^Tw_t$ where $w_t$ is the total time customers wait in time step t.

Such constrained MDPs can be handled by including a cost critic model in addition to the value critic and adding a cost penalty term to the objective function.

\subsubsection{Symmetry Constraint}
Finally, a crucial aspect of the problem as understood by \cite{dasilva2026entanglement} is an additional restriction that the routers' joint policy be ``balanced", i.e.\ invariant under a relabeling of the servers.
Explicitly, the joint policy $\pi(a_1,a_1|x_1,x_2)$ must have the following symmetry:
\begin{equation}
    \pi(a_1,a_2|x_1,x_2)=\pi(1-a_1,1-a_2|x_1,x_2).
\end{equation}
This rules out some strategies that might, depending on chosen parameters, otherwise be optimal, including strategies where routers send their jobs preferentially to one server or the other.
To implement this constraint in practice, the actions $a_1$, $a_2$ of the two servers are simultaneously transformed $a_1\rightarrow 1-a_1$, $a_2\rightarrow 1-a_2$ with probability $\frac{1}{2}$ at each step.
Thus the deterministic transition rule \eqref{eq:full_transition_rule} is effectively replaced by the stochastic one
\begin{equation}\label{eq:symmetric_transition_rule}
\begin{tikzcd}
    & \left(
        (1-a_1)x_1+(1-a_2)x_2-\Delta t+\left\{\begin{rcases}0\\q_1\end{rcases}\right.,
        a_1x_1+a_2x_2-\Delta t+\left\{\begin{rcases}0\\q_2\end{rcases}\right.
    \right) \\
    (q_1,q_2) \arrow[ru, "\frac{1}{2}", end anchor=west] \arrow[rd, "\frac{1}{2}", end anchor=west] & \\ 
    & \left(
        a_1x_1+a_2x_2-\Delta t+\left\{\begin{rcases}0\\q_1\end{rcases}\right.,
        (1-a_1)x_1+(1-a_2)x_2-\Delta t+\left\{\begin{rcases}0\\q_2\end{rcases}\right.
    \right)
\end{tikzcd},
\end{equation}
where, of course, we take the upper entry in braces when $q_i<0$ and $\Delta q_i=0$, as before.

\subsection{Experiment Procedure and Results}
We implemented the above MDP as an RL environment using the JaxMARL library \cite{flair2024jaxmarl}.
The quantum coordinator model simulated quantum measurements of dimension 2 with 2 outcomes, parametrized using $\mathsf{QuantumSoftmax}$ layers as described in \ref{sec:povm_softmax}.
For the shared quantum state $\rho$ we used the $2\times2$-dimensional Bell state \eqref{eq:bell_state}.
Recall also that we are considering a version of the problem where the throughput in terms of the unbroken interval length $t$ is $T(t)=t^2$, the rate parameter for the interarrival times $\Delta t$ is $\lambda=0.8$, and the rate parameter for the service request sizes $x_i$ is $\mu=1.0$.

Given the simplicity of the environment, in particular the fact that the action set has size two, we made the actor models trivial, simply passing on the advice received from the coordinator model.
This reduced redundancy in the architecture, as the coordinator model itself gave measurement outcomes out of 2 possible outcomes and had sufficient parameters.
We wish to stress that in more complicated environments, for example those with many actors or large or continuous action spaces, this would not be an adequate choice.

The results presented in Fig.~\ref{fig:server_router_perf} were obtained by first running 8 training runs with wait time constraint set to 5.5, choosing the best performing solution, then performing a sequence of runs, each warm-started from the previous one, with the wait time constraint incremented.
For each run the best and final models were saved, evaluated and plotted based no the evaluation.
The theoretical optimal wait times (for both the shared randomness and full-communication cases) which we benchmark against were calculated following the queueing theory arguments of \cite{dasilva2026entanglement}

\section{Numerical Stability of Quantum Softmax and Automatic Differentiation}
\label{sec:differentiability-quantum-softmax}
In Algorithm~\ref{alg:quantum-softmax}, we note that $\mathsf{QuantumSoftmax}$ requires taking the explicit inverse square root of the positive-definite matrix, the matrix $S=\sum_j\mathrm{expm}(\tilde{Z}_j)$, where $\{\tilde{Z}_j\}_{j\in [n]}$ are the POVM ``logits."
Beyond the positive-definiteness induced by the matrix exponential of the Hermitian POVM logits, there are no constraints on the eigenvalues of $S$ and, in practice, the condition number of $S$ can become arbitrarily large, causing numerical difficulties in calculating the inverse square root.
This issue can be remedied, in practice, by adding a conditioning term to the loss which penalizes $S$ for being far from the identity: $L_\text{cond}=(S-I)^\mathsf{H}(S-I)$. Doing so does not reduce expressiveness of parameterized POVMs that make use of~$\mathsf{QuantumSoftmax}$, given that any POVM $\{P_j\}_j$ can be written in terms of POVM logits $Z_j=\log(P_j)$ whose corresponding matrix $S$ is $S=\sum_j\exp(Z_j)=\sum_jP_j=I$.

\section{Coordinator-Advice Policies}

\begin{figure}[!ht]
    \centering
    \includegraphics[width=0.5\linewidth]{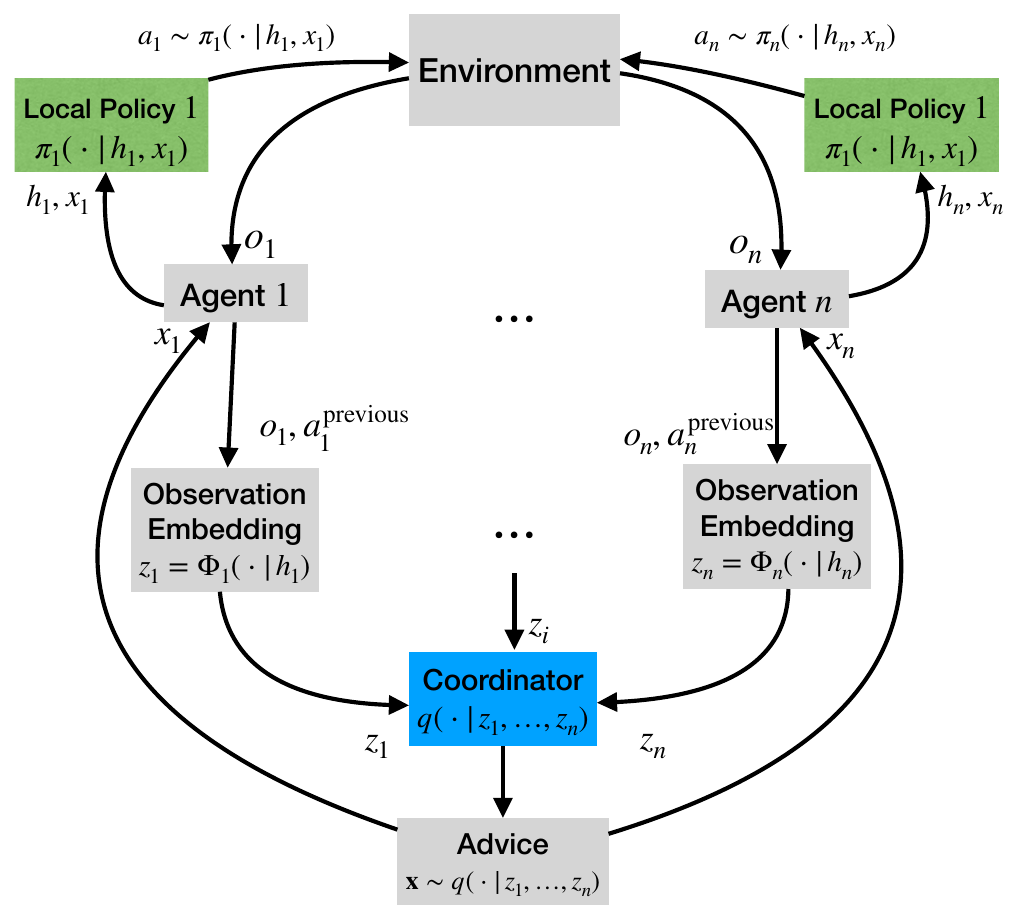}
    \caption{Coordination network.}
    \label{fig:coordinator}
\end{figure}

\end{document}